\newif\ifnotes
\tikzset{
    ncbar angle/.initial=90,
    ncbar/.style={
        to path=(\tikztostart)
        -- ($(\tikztostart)!#1!\pgfkeysvalueof{/tikz/ncbar angle}:(\tikztotarget)$)
        -- ($(\tikztotarget)!($(\tikztostart)!#1!\pgfkeysvalueof{/tikz/ncbar angle}:(\tikztotarget)$)!\pgfkeysvalueof{/tikz/ncbar angle}:(\tikztostart)$)
        -- (\tikztotarget)
    },
    ncbar/.default=0.5cm,
}
\tikzset{square left brace/.style={ncbar=2mm}}
\tikzset{square right brace/.style={ncbar=-2mm}}
\newcommand{\vgnote}[1]{$\ll$\textsf{\color{blue} Venkat: { #1}}$\gg$}
\newcommand{\mgnote}[1]{\ifnotes $\ll$\textsf{\color{magenta} Meghal: { #1}}$\gg$ \fi}
\Crefname{theorem}{Theorem}{Theorems}
\Crefname{claim}{Claim}{Claims}
\Crefname{lemma}{Lemma}{Lemmas}
\Crefname{proposition}{Proposition}{Propositions}
\Crefname{corollary}{Corollary}{Corollaries}
\Crefname{definition}{Definition}{Definitions}
\newcommand{\ECC}{\mathsf{ECC}}
\newcommand{\DEC}{\mathsf{DEC}}
\newcommand{\qlist}{\mathsf{qlist}}
\newcommand{\enc}{\mathsf{enc}}
\newcommand{\dec}{\mathsf{dec}}
\newcommand{\LDC}{\mathsf{LDC}}
\newcommand{\C}{\mathsf{C}}
\newcommand{\bbN}{\mathbb{N}}
\newcommand{\bbF}{\mathbb{F}}
\newcommand{\bbK}{\mathbb{K}}
\newcommand{\bbE}{\mathbb{E}}
\newcommand{\cA}{\mathcal{A}}
\newcommand{\cD}{\mathcal{D}}
\newcommand{\cF}{\mathcal{F}}
\newcommand{\cM}{\mathcal{M}}
\newcommand{\customlabel}[2]{%
   \protected@write \@auxout {}{\string \newlabel {#1}{{#2}{\thepage}{#2}{#1}{}} }%
   \hypertarget{#1}{#2}
}
\newcounter{casenum}
\newcounter{subcasenum}
\newcounter{casenump}
\newcommand{\casep}[2]{
    \ifthenelse{\equal{\value{casenump}}{0}}{
    \vskip.5\baselineskip\par\noindent
    }{}
    {\it Case \arabic{casenump}:} {\it #1}
    \vskip0.1\baselineskip
    \begin{addmargin}[1.5em]{1em}
    #2
    \end{addmargin}
    \addtocounter{casenump}{1}
}
\newcounter{subcasenump}
\begin{document}

\title{Tight bounds for stream decodable error-correcting codes}
\author{
    Meghal Gupta\thanks{E-mail: \texttt{meghal@berkeley.edu}. This author was supported by an NSF GRFP Fellowship and Venkat Guruswami's Simons Investigator award.}\\UC Berkeley \and 
    Venkatesan Guruswami\thanks{E-mail: \texttt{venkatg@berkeley.edu}. Supported in part by a Simons Investigator award and NSF grants 2210823 and CCF-2211972.}\\UC Berkeley \and 
    Mihir Singhal\thanks{E-mail: \texttt{mihirs@berkeley.edu}. This author was supported by an NSF GRFP Fellowship and NSF CCF-2210823.}\\UC Berkeley
}
\date{\today}

\sloppy
\maketitle
\begin{abstract}
In order to communicate a message over a noisy channel, a sender (Alice) uses an error-correcting code to encode her message $x$ into a codeword. The receiver (Bob) decodes it correctly whenever there is at most a small constant fraction of adversarial error in the transmitted codeword. This work investigates the setting where Bob is computationally bounded. Specifically, Bob receives the message as a \emph{stream} and must process it and write $x$ in order to a write-only tape while using low (say polylogarithmic) space. We show three basic results about this setting, which are informally as follows:
\begin{enumerate}[label=(\arabic*)]
    \item There is a stream decodable code of near-quadratic length.
    \item There is no stream decodable code of sub-quadratic length.
    \item If Bob need only compute a private linear function of the input bits, instead of writing them all to the output tape, there is a stream decodable code of near-linear length.
\end{enumerate}

\end{abstract}
\thispagestyle{empty}
\newpage
\tableofcontents
\pagenumbering{roman}
\newpage
\pagenumbering{arabic}

\section{Introduction}

Consider the following task: a sender wishes to communicate a message to a receiver that it receives and processes bit-by-bit. This scenario arises, for instance, in automatic control applications, where a device receives an incoming stream of instructions that it executes in sequence. Concretely, consider a small satellite receiving instructions from a large control center on the ground. The control center wants to send the satellite instructions in a way that satisfies two properties:
\vspace{-0.5em}
\begin{itemize}
\itemsep=0ex
    \item \textbf{Error resilience.} The satellite should execute the correct set of instructions even if a constant fraction of the transmission is corrupted.
    \item \textbf{Low-memory decoding.} The satellite should be able to process the instructions in order while only using limited space (significantly less than the total length of the instructions). 
\end{itemize}

Sending the list of instructions $x_1\ldots x_n$ directly, although easy to process and execute one-by-one, is not resilient to error, and thus an unsatisfactory solution. On the other hand, encoding $x_1\ldots x_n$ into $\ECC(x_1\ldots x_n)$ with a standard error-correcting code~\cite{Shannon48, Hamming50} is resilient to error, but requires the receiver to store the whole stream to decode, which is too much space. An intermediate approach would be to encode the individual instructions each by error-correcting codes as $\ECC(x_1)\ECC(x_2)\ldots\ECC(x_n)$. However, this does not withstand a constant overall fraction of corruption: the adversary can corrupt $\ECC(x_1)$ entirely, using only a $1/n$ fraction of corruption, and thus never allow the satellite to recover $x_1$. What we would like is a code that encodes the message globally but can be decoded in low space as the encoded message arrives.
\vspace{-0.5em}
\paragraph{Stream-decodable codes.} This motivates the notion of a \emph{stream-decodable} error-correcting code. In this model, we require that the receiver can output the entire message $x_1\ldots x_n$ when any small fraction $\rho$ of the message is adversarially corrupted while using low space (for example, $\polylog(n)$ space) to process the transmission bit-by-bit. 
More formally,\footnote{The technical definition of the model is in Section~\ref{sec:model}.} a stream decodable code has the following two components:
\vspace{-0.5em}
\begin{itemize}
\itemsep=0ex
    \item An encoding function $\enc: \{0,1\}^n \to \{0,1\}^{m(n)}$.
    \item A randomized decoding algorithm $\dec: \{0,1\}^{m(n)} \to \{0,1\}^n$ that uses $s(n)$ space ($s(n)$ is much smaller than $n$: for instance, $s(n)=\polylog(n)$). For all $x$, and for any $z\in \{0,1\}^{m(n)}$ within Hamming distance $\rho m(n)$ of $\enc(x)$, it should hold that $\dec(z)$ outputs $x$ with high probability. 
\end{itemize}

It is not clear that such codes should exist at all for any $s(n)=o(n)$, even with any positive $\rho$ error and any communication blow-up $m(n)$. In particular, a standard error-correcting code could require storing the full encoding at once to process, and so it requires $s(n)=n$. 

Our first result constructs stream-decodable codes that achieve the following parameters (see Theorem~\ref{thm:n2const} for a precise statement).
\vspace{-0.5em}
\begin{itemize}
\itemsep=0ex
\item Error resilience of $\rho=\frac14-\eps$ for any $\eps>0$, matching the best possible error resilience of standard error-correcting codes.
    \item Near-quadratic blow-up in communication: $m(n)=\frac{n^{2+r(s(n))}}{s(n)}$ (here $r(s(n))$ is small -- typically $o(1)$, but when $s(n)=\log(n)^t$, then $r(s(n))=1/t$). This is a larger blow-up in communication than incurred by standard error-correcting codes. 
\end{itemize} 
The construction itself is quite simple: it encodes the message by a locally decodable code of near-linear length, and repeats the encoding $O(n)$ times. The more interesting part is the corresponding decoding algorithm for Bob. To this end, we work with stronger local decoding guarantees, specifically having access to soft information for unique decoding, and local list decoding with advice from close to $1/2$ errors. We provide an overview of the approach in \cref{subsec:overview-quad-const}. 

\medskip\noindent\textbf{A matching lower bound.} Our next result demonstrates, surprisingly, that the communication blowup of our codes is essentially optimal: any stream-decodable code requires transmission length $m(n)=\Omega\left(\frac{n^2}{s(n)}\right)$, in contrast to the standard error-correction model. (See Theorem~\ref{thm:n2lower} for the precise statetemt.)  This result is surprising and notable because it obtains a strong lower bound on an information-theoretic quantity (the codeword blow-up) leveraging the computational restriction of space-boundedness. 
The lower bound is established by carefully controlling the set of message indices that the decoder can output when processing successive blocks of sub-linear size of the stream. We provide a high level overview of this approach in Section~\ref{subsec:overview-lb}.

\medskip\noindent\textbf{Comparison to \cite{GuptaZ23}.} Our notion of stream-decodable codes is similar to the model recently introduced by Gupta and Zhang~\cite{GuptaZ23}. Their model is identical to ours, except that instead of outputting $x_1\ldots x_n$, the decoder need only output a single bit $f(x_1\ldots x_n)$. Here, the function $f$ represents the output of an arbitrary streaming algorithm performed on $x_1\ldots x_n$, so it must be possible to compute in $s(n)$ space in a streaming manner. The function $f$ is unknown to Alice (or else she could simply send the value $f(x_1\ldots x_n)$ to Bob) but known to the adversary causing the errors. One could imagine, for example, that $f$ is an arbitrary linear function of $x_1\ldots x_n$, or one's physical location after executing some (possibly non-commutative) sequence of movements $x_1\ldots x_n$.

For this problem, \cite{GuptaZ23} provide a scheme requiring slightly larger than $O(n^4)$ encoding length.\footnote{Specifically, if $s(n)=\log(n)^t$, their code requires $n^{4+O(1/t)}$ space.} Our notion of a stream decodable code is necessarily stronger: that is, if the decoder can write $x_1\ldots x_n$ to an output tape in that order, they can also compute the output of any streaming algorithm $f$ in low space. As such, our construction improves upon theirs, providing a nearly quadratic-length code for their problem. 

Furthermore,~\cite{GuptaZ23} specifically investigate the scenario where Alice knows beforehand that Bob's function $f$ is a linear function of $x_1\ldots x_n$. 
For this restricted class of functions, Gupta and Zhang demonstrate a scheme that uses slightly larger than $O(n^2)$ encoding length. 

\medskip\noindent\textbf{Near-linear length code for stream computation of linear functions.} 
Our final result, stated precisely as \cref{thm:linear-const}, is a new scheme for stream-decoding linear function. Specifically, we improve upon Gupta and Zhang's result, demonstrating a scheme that requires near-linear communication in $n$ for computing linear functions. This is achieved using a tensor product of locally decodable codes for the encoding, and a careful recursive decoding approach to recover the desired linear function of the message; see \cref{subsec:overview-linear-func} for an overview.

\subsection{The model definition} \label{sec:model}

Before we provide the technical statements of our three main results, let us formally define the model of stream decodable codes. A \emph{$(\rho, m(n), s(n))$-stream coding scheme} with probability $p$ of success consists of the following:
\vspace{-0.5em}
\begin{itemize}
\itemsep=0ex
    \item An explicit family of encoding algorithms $\enc = \{ \enc_n : \{ 0, 1 \}^n \rightarrow \{ 0, 1 \}^{m(n)} \}$ with encoding time $\poly(n)$.
    \item An explicit family of randomized decoding algorithms $\dec = \{ \dec_n : \{ 0, 1 \}^{m(n)} \rightarrow \{ 0, 1 \}^n \}$ that read the input in a stream and are permitted $s(n)$ memory $\poly(n)$ time. The output is written to a write-only tape that writes only left-to-right. Whenever the Hamming distance $\Delta(z, \enc(x)) < \rho m$, then $\dec(z)$ outputs $x$ except with probability $p$. 
\end{itemize}


More generally, a $(\rho, m(n), s(n))$-stream coding scheme for a family of functions $\cF = \{f : \{0,1\}^* \to \{0,1\}^* \}$ consists of the following similar components, with the same time and space guarantees as above: 
\vspace{-0.5em}
\begin{itemize}
\itemsep=0ex
    \item An explicit family of encoding algorithms $\enc^{(\cF)} = \{ \enc^{(\cF)}_n : \{ 0, 1 \}^n \rightarrow \{ 0, 1 \}^{m(n)} \}$.
    \item For each $f\in \cF$, an explicit family of randomized decoding algorithms $\dec^{(f)} = \{ \dec^{(f)}_n : \{ 0, 1 \}^{m(n)} \rightarrow \{ 0, 1 \}^n \}$ that read the input in a stream and write to a left-to-right output tape. Whenever the Hamming distance $\Delta(z, \enc(x)) < \rho m$, then $\dec^{(f)}(z)$ outputs $f(x)$ except with probability $p$. 
\end{itemize}
We emphasize that the encoding has no knowledge of $f$, only of the family $\cF$, while the decoding algorithm must succeed for all $f$.

\subsection{Our results}

In this section, we formally state our results. In this section, when we use the phrase ``absolute constant'' to describe any parameter, we mean that any asymptotic notation henceforth may have constants depending on that absolute constant. 

The first result is a stream decodable error-correcting code incurring approximately quadratic blow-up in communication. 


\begin{restatable}{theorem}{quadconst}
\label{thm:n2const}
    Fix an absolute constant $\eps>0$. Then, for some large absolute constants $C=C(\eps)$ and $c=c(\eps)$, the following hold.
    \vspace{-0.5em}
    \begin{itemize}
    \itemsep=0ex
        \item If $s(n)=(\log n)^t$ for some absolute constant $t>C$, then there is a $\left(\frac14-\eps, n^{2+c/t}, s(n) \right)$-stream coding scheme. 
        \item For any function $s(n) = (\log n)^{\omega(1)}$, there is a $\left(\frac14-\eps, \frac{n^{2+o(1)}}{s(n)}, s(n) \right)$-stream coding scheme. (Here, the implicit constants in the $o(1)$ may depend on those in the $\omega(1)$.)
    \end{itemize}
    \vspace{-0.5em}
    Both schemes succeed with probability $1-\frac{1}{n^{\omega(1)}}$.
\end{restatable}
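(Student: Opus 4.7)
The plan is to combine $T$ concatenated copies of a strong locally decodable code (LDC) with a carefully designed streaming decoder that exploits stronger-than-usual local guarantees to tolerate $(1/4-\eps)$ adversarial error while writing output in order within $s(n)$ space. First I would invoke an explicit binary LDC $C \colon \{0,1\}^n \to \{0,1\}^N$ of near-linear length $N = n^{1+o(1)}$ and polylogarithmic query complexity $q = \polylog(n)$, equipped with two enhanced local guarantees: (i) a soft-information unique decoder that returns, alongside each bit estimate, a confidence weight that is concentrated near $0$ when the underlying copy is heavily corrupted; and (ii) a local list-decoder tolerating $(1/2-\eps)N$ errors that outputs a list of $O(1/\eps^2)$ candidate pseudo-decoders together with $O(\log n)$ bits of advice that selects the correct one. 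The advice I would bundle into the encoding as a short, redundantly-protected hash of $x$; such LDCs are available from appropriate Reed--Muller-style constructions.

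The encoding is defined as $\enc(x) = C(x)^T$, with $T$ tuned so that $TN$ matches the target length in each regime: $T = n^{1+O(1/t)}$ in the polylog-space regime and $T = n^{1+o(1)}/s(n)$ in the super-polylog regime. For the streaming decoder, I would partition the $n$ message positions into $B = n/k$ batches of size $k = \Theta(s(n)/\polylog(n))$ and process them in stream order. For each batch, the decoder dedicates a group of copies and, as those copies stream past, runs the query schedulers of the local list-decoders for the $k$ target bits in parallel, buffering only the $O(kq\log N) = O(s(n))$ bits at queried positions. Once the group finishes, the local list-decoders yield, for each target bit and each copy, a short list of candidates; the decoder disambiguates each list using the (incrementally recovered) advice hash and then aggregates across the group by soft-information-weighted majority. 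The recovered batch is written to the tape and the buffered state is released.

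The error analysis would use the global bound that, since the adversary places at most $(1/4-\eps)m$ errors across the stream, fewer than half of the $T$ copies can have error rate exceeding $1/2 - \eps$; so each group should contain a majority of ``usable'' copies on which the list-decoder succeeds, after which soft-information-weighted aggregation drives the per-bit error to $n^{-\omega(1)}$ and a union bound over the $n$ message bits finishes. The main obstacle is designing the copy-to-batch assignment so that this global bound translates into a per-group guarantee: a naive contiguous-block scheme lets the adversary concentrate their entire budget on one block, wiping it out, while a fully random scheme forces the decoder to maintain state for all $n$ bits at once. Resolving this tension---by structuring the assignment so that every batch's copies are spread throughout the stream while still allowing in-order finalization within bounded buffered state---together with realizing the advice-based list disambiguation in $s(n)$ space (representing list elements via the $O(\log n)$-bit seeds of their pseudo-decoders rather than storing $\ell$ full messages), is the technical heart of the construction.
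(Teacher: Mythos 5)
Your encoding (repeat a near-linear-length LDC with soft unique decoding and $(1/2-\eps)$ local list decoding) matches the paper's, but the decoder has two genuine gaps, one of which you acknowledge but do not resolve.

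First, the advice mechanism. You propose bundling the list-disambiguation advice into the encoding as a ``redundantly-protected hash of $x$.'' This does not work as described: to select the correct candidate out of a local list-decoding in low space, the advice must be \emph{locally checkable} against each candidate --- in the Reed--Muller/STV setting this means actual codeword symbols at positions drawn from a fixed distribution $\cD$, so that each candidate polynomial on the queried curve can be tested against the advice values at the curve's sample points. A global hash of $x$ cannot be evaluated on a candidate without reconstructing an entire candidate message, which is exactly what a low-space streaming decoder cannot do. Moreover, recovering the advice itself from the corrupted stream is a nontrivial subproblem: the paper devotes the entire first phase to it, accumulating soft-decoding confidences for the advice positions $y_{j_1},\dots,y_{j_u}$ across up to a $(1-\eps)$ fraction of the copies, with a careful accounting of how much adversarial budget this phase necessarily consumes. ``Redundantly protected'' hides precisely this difficulty.

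Second, the copy-to-batch assignment. You correctly identify that pre-assigning a contiguous group of copies to each batch of message bits is fatal (the adversary erases one group) and that random assignment blows up the buffered state, and you declare resolving this tension to be the technical heart --- but you do not resolve it, so the proof is incomplete at its central step. The paper sidesteps the problem entirely: it does not assign copies to batches at all. Instead, a second set of uniformly random ``checksum'' positions $y_{j_{u+1}},\dots,y_{j_{u+v}}$ (also recovered in phase one) lets the decoder estimate each arriving copy's error rate by a Chernoff bound; whenever a copy tests below the $(1/2-O(\eps))$ threshold, the decoder opportunistically list-decodes-with-advice the \emph{next} $r = s(n)/Q^2$ not-yet-output bits and writes them. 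A global counting argument --- the residual error budget after phase one divided by the per-copy error threshold --- shows at least $n/r$ good copies survive, so all bits get output in order. This adaptive, detection-based scheduling is what makes the in-order, bounded-state requirement compatible with adversarial error concentration, and it is missing from your proposal.
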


The second result establishes that Theorem~\ref{thm:n2const} is essentially optimal. That is, any encoding of a message that permits a low-space streaming algorithm to decode requires $\Omega\left( \frac{n^2}{s(n)} \right)$ length.

\begin{restatable}{theorem}{quadlower}
\label{thm:n2lower}
    Fix an absolute constant $\rho>0$ and let the space for the decoding algorithm be $s(n) \ge \log n$. Suppose there is a $\left(\rho, m, s(n) \right)$-coding scheme for streams that succeeds with probability at least $1-\frac{1}{2n^2}$. Then, $m = \Omega\paf{n^2}{s(n)}$.
\end{restatable}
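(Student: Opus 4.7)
My plan is to establish $m = \Omega(n^2/s(n))$ via an adversarial block-swap argument along the codeword stream, using the decoder's state trajectory to fingerprint the message.

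\textbf{Setup.} Using the success probability $1 - 1/(2n^2)$, I would fix (by averaging) a randomness string $r^*$ such that the decoder succeeds on at least a $(1-1/n)$-fraction of messages, which I'll call the ``good'' set $G \subseteq \{0,1\}^n$. Divide the stream into $k$ consecutive blocks $B_1,\ldots,B_k$ of equal size $L$; the right choice turns out to be $L$ on the order of $n$, so that (assuming $m$ is at least the anticipated order and $\rho$ is constant) $L \leq \rho m$, allowing the adversary to fully corrupt any single block.

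\textbf{Block-swap sub-claim.} For each good $x$, let $\sigma_b(x)$ be the decoder's configuration (internal $s$-bit state plus output-tape position, so $\sigma_b(x) \in \{0,1\}^{s+O(\log n)}$) at the end of block $b$, and let $j_b(x)$ be the number of output bits written by that point. The key sub-claim: if $x,x' \in G$ share $\sigma_{b-1}(x) = \sigma_{b-1}(x')$, consider the adversary attack on $\enc(x)$ that replaces $B_b$ with $B_b$ of $\enc(x')$ (a corruption of $\le L \le \rho m$ bits). Since the decoder's behavior on $B_b$ from the common state $\sigma_{b-1}$ reading $\enc(x')|_{B_b}$ is identical (by determinism) to its behavior in normal $\enc(x')$-decoding, which by $r^*$-correctness on $x'$ writes exactly $x'_{j_{b-1}+1},\ldots,x'_{j_b(x')}$, and the attack's final output must equal $x$, the writes must match $x$'s bits in those positions, forcing $x$ and $x'$ to agree on $[j_{b-1}+1, j_b(x')]$.

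\textbf{Chaining and counting.} If $\sigma_b(x) = \sigma_b(x')$ for all $b = 0,1,\ldots,k-1$, the sub-claim's agreement intervals at consecutive blocks concatenate to cover $[1,n]$ (using $j_k = n$), forcing $x = x'$. Hence the fingerprint $\Phi(x) = (\sigma_1(x),\ldots,\sigma_{k-1}(x))$ is essentially injective on $G$. Since $|\mathrm{Image}(\Phi)| \le 2^{(k-1)(s+O(\log n))}$ and $|G| \ge (1-1/n)2^n$, this gives $(k-1)(s + O(\log n)) \ge n - O(\log n)$, i.e., $k \ge \Omega(n/s(n))$. Combined with $L \approx n$, this yields $m = kL \ge \Omega(n^2/s(n))$.

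\textbf{Main obstacle.} The central subtlety is carrying out the above injectivity argument under the given $1 - 1/(2n^2)$ success probability: the sub-claim requires $r^*$ to be correct not just on $\enc(x),\enc(x')$ but also on the specific swap attack input, and a naive union bound over the exponentially many triples $(x,x',b)$ is far too expensive. The fix, which I expect to be the technical crux, is to replace per-pair injectivity with a moment/collision-counting argument: bound the expected (over $r^*$) number of $\Phi$-colliding pairs at which the sub-claim fails by the product of $|G|^2 \cdot k$ and the per-input failure probability $O(1/n^2)$, then use this to show $|\mathrm{Image}(\Phi)| \ge (1-o(1))|G|$ via second-moment / Cauchy--Schwarz and conclude. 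Carefully choosing block size $L$ so that the sub-claim yields nontrivial agreement intervals on every block, and ensuring the collision bound is tight enough to preserve the $k \ge \Omega(n/s)$ count, is where most of the care must go.
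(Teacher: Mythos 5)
Your plan has a genuine gap, and it sits exactly where you flagged it: the block-swap sub-claim needs the fixed randomness $r^*$ to make the decoder succeed on the \emph{swapped} input for essentially every triple $(x,x',b)$ with colliding states, and neither a union bound nor your proposed second-moment fix can deliver this. The expected (over $r^*$) number of triples at which the sub-claim fails is bounded only by $|G|^2\cdot k\cdot \frac{1}{2n^2}\approx 2^{2n}k/n^2$, which is vastly larger than $|G|=\Theta(2^n)$. Feeding this into Cauchy--Schwarz, $|\mathrm{Image}(\Phi)|\ge |G|^2/\sum_\phi N_\phi^2 \gtrsim |G|^2/(2^{2n}k/n^2)=n^2/k$, which is useless; no choice of block length $L$ helps, since the number of distinct swapped inputs is already $\ge |G|\cdot 2^L\cdot k$ and the per-input failure probability is only $1/(2n^2)$, not $2^{-\Omega(n)}$. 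There is also a structural reason no patch of this flavor can work: nothing in your argument uses that $L=\Theta(n)$ rather than $L=\rho m$. Running it with $k=O(1/\rho)$ blocks of length $\rho m$ would yield $(k-1)(s+O(\log n))\ge n$, i.e.\ $s=\Omega(\rho n)$, for \emph{any} scheme succeeding with probability $1-\frac{1}{2n^2}$ --- contradicting Theorem~\ref{thm:n2const}, which achieves $s=\mathrm{polylog}(n)$. So the state-fingerprint/block-swap argument is really a deterministic, zero-error argument (where it does prove $s=\Omega(\rho n)$), and the tolerance to a $1/\mathrm{poly}(n)$ failure probability is not a technicality but the place where the approach collapses.

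The paper's proof avoids fixing the randomness altogether and, crucially, does not try to make the per-block output a function of $O(s)$ bits alone. Instead, for each block it considers the set $T(a_i,B_i)$ of index--bit pairs output with probability at least $1/n^2$ from a ``good'' memory state $a_i$, and shows via a counting argument over the \emph{union} of $r=\ell/s$ such sets (a deterministic function of only $2\ell$ bits) that for some message $x$ there is a fixed set $S_i$ of size $O(\ell)$ such that every good state can output only $O(s)$ indices outside $S_i$ during block $i$ (Lemma~\ref{lemma:lb-1}). The adversary then erases blocks at a random offset so that the decoder, to finish, would have to cross a window of $10s$ consecutive indices mostly outside $S_i$ within a single block --- which the lemma forbids. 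The two ingredients your proposal is missing are precisely this ``fixed set plus $O(s)$ stragglers'' decomposition (which is what makes the per-block progress $O(s)$ rather than $O(\ell)$) and an adversary that erases many short blocks rather than swapping one long one.
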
 

The final result states that the encoding length can be made almost linear for stream coding schemes that compute a linear function $f(x)$. Here, the decoder need only output a private linear function $f$ of the input bits rather than the entire input. 
When $s(n)=n^\delta$ for sufficiently small $n$, this can even be made exactly linear, which is optimal.


\begin{restatable}{theorem}{linearconst}
\label{thm:linear-const}
    Fix an absolute constant $\eps>0$. Then, for some large absolute constants $C=C(\eps)$ and $c=c(\eps)$, the following hold.
    \vspace{-0.5em}
    \begin{itemize}
    \itemsep=0ex
        \item If $s(n)=(\log n)^t$ for some absolute constant $t>C$, then there is a $\left(\frac14-\eps, n^{1+c/\sqrt{t}}, s(n) \right)$-stream coding scheme for the family of linear functions.
        \item If $s(n)=\Omega(n^\delta)$ for some absolute constant $\delta>0$, then there is a $\left(\frac14-\eps, O(n), s(n) \right)$-coding scheme for the family of linear functions.
    \end{itemize}
    \vspace{-0.5em}
    Both schemes succeed with probability $1-\frac{1}{n^{\omega(1)}}$.
\end{restatable}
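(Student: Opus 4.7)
The plan is to encode using a $k$-fold tensor product of a suitable near-linear locally decodable code (LDC) $C_0$, and then stream-decode by recursing on the tensor structure. Viewing $x \in \{0,1\}^n$ as a $k$-dimensional array of side $N = n^{1/k}$ and applying $C_0$ along each axis produces a codeword $Y$ of side $L = N^{1+\alpha}$, so the final block length is $L^k = n^{1+\alpha}$. The $k$-fold tensor distance $\delta_0^k$ comfortably exceeds the required threshold for unique decoding from $(\frac14-\eps)$ errors.

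To decode $f(x) = \sum_i a_i x_i$, write $x$ as a matrix $M$ and let $G$ be the generator of $C_0$, so the intermediate encoding is $GM$. Using local decoding, each $M_{p,q}$ is expressible as a linear combination of entries of the column $(GM)_{\cdot,q}$, which rearranges $f(x) = \sum_u g_u((GM)_{u,\cdot})$ for linear functions $g_u$ whose coefficient vector is determined by $a$ and the decoding seed. Crucially, each outer row $Y_{u,\cdot}$ of the codeword is itself a $(k-1)$-fold tensor encoding of the slice $(GM)_{u,\cdot}$. So as the stream arrives row by row we recursively invoke the stream decoder on $Y_{u,\cdot}$ with the linear function $g_u$, and accumulate $g_u((GM)_{u,\cdot})$ into a running total for $f(x)$. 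The coefficients of $g_u$ are produced on the fly: when the recursive call asks for the $q$-th coefficient, we compute it from $a$ and the seed by enumerating the polylogarithmically many indices $p$ whose outer decoding set contains $u$, which is efficient by the smoothness of $C_0$.

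Each recursion level contributes only $O(\log n)$ state (accumulator, outer counter, decoding seed), and after $k$ levels we reach a base case of a single $C_0$ codeword of length $L$, which we buffer and decode directly. The constraint $L + O(k \log n) \le s(n)$ dictates the parameter choice. For $s(n) = (\log n)^t$ with $t$ large, we choose $k \approx \log n / (t \log \log n)$ so that $N = (\log n)^{t/(1+\alpha)}$ and $L \le s(n)$, and tune $C_0$ so that its blowup exponent satisfies $\alpha \le c/\sqrt t$ using a high-rate LDC with polylog query complexity; this yields final length $n^{1+c/\sqrt t}$. For $s(n) = \Omega(n^\delta)$ we choose $k = O(1/\delta)$ and use a constant-rate, polylog-query LDC (e.g.\ Kopparty--Saraf--Yekhanin), giving $O(n)$ length.

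The main technical obstacle will be error tolerance. The global $(\frac14-\eps)m$ errors do not split evenly across outer slices, so some recursive calls on $Y_{u,\cdot}$ will see more errors than $C_0$'s unique decoding radius. We plan to handle this by having each level's decoding issue several random seeds and softly aggregate the resulting candidate values, leveraging the fact that the $k$-fold tensor distance $\delta_0^k$ is large enough that the adversary cannot corrupt too many seed-instances at once. This approach mirrors the soft-information and local-list-decoding-with-advice technique used for \cref{thm:n2const}. Threading this robust decoding through $k$ recursion levels while keeping each level's state at $O(\log n)$, and ensuring the accumulators at all levels compose correctly into $f(x)$ rather than a perturbation thereof, will be the crux of the argument.
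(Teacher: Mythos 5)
Your high-level architecture --- a $d$-fold tensor power of a locally decodable code, decoded recursively level by level, with the linear functional commuted through the local decoder so that each slice of the stream is handled by a recursive call computing a derived linear function --- is exactly the paper's. However, the two issues you defer are where essentially all of the work lies, and as stated your proposal has two concrete gaps. First, the distance claim is false for a binary base code: a positive-rate binary code has relative distance at most $\frac12$, so the $k$-fold tensor distance $\delta_0^k \le 2^{-k}$ is far below the $\frac12-O(\eps)$ needed, already for $k=2$. One needs the base code to have relative distance $1-O(\eps/d)$ at every level, which forces a large alphabet $\bbK$ with $|\bbK| \ge \poly(d/\eps)$, followed by concatenation with a constant-size binary inner code; the paper's construction (Theorem~\ref{thm:ldc-large} with $\eps' = \eps/10d$) is built around precisely this, and the erasure symbol $\bot$ and the soft guarantee $p(x_i)+0.5\,p(\bot) > 1-\delta-\eps'$ are what make the per-level losses add up to only $O(d\eps')=O(\eps)$ rather than multiply.

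Second, the rearrangement $f(x) = \sum_u g_u((GM)_{u,\cdot})$ is an identity only for the \emph{uncorrupted} codeword: local decoding under errors is not a fixed linear functional, and since the adversary can concentrate its budget so that some slices $Y_{u,\cdot}$ are decoded incorrectly with certainty, a plain sum over $u$ is destroyed by a single bad term. The robust version must retain, at every level, a genuine error-correcting local decoding of each message coordinate $i$ from the $Q$ recursively computed values at positions $\qlist_i$ (this is the paper's Lemma~\ref{lem:linear-correct}, propagating $p_{correct}+0.5p_\bot \ge 1-\theta-2a\eps'$ through the recursion). This in turn means that a single stream segment $z_I$ must serve several recursive calls in parallel --- one for each $i$ with $I\in\qlist_i$, each with a different derived linear function $\ell_{T|i}$ --- so the space recursion is multiplicative with branching factor equal to the maximum overlap of the query lists, not additive $O(\log n)$ per level. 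Controlling that overlap (the paper's Lemma~\ref{lem:qlist}, guaranteeing no position lies in more than $\bigl\lceil 3rQ^2/R\bigr\rceil$ lists) and solving the resulting recursion $f(a) \le O(RQ^2) + \bigl\lceil 3rQ^2/R\bigr\rceil f(a-1)$ is the actual heart of the space analysis; your budget $L + O(k\log n) \le s(n)$, with the base case buffering an entire length-$L$ codeword, does not account for the per-level storage of the $r$ query lists and partial decoding maps, nor for the parallel calls.
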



\subsection{Discussion and further directions}\label{sec:open}

\paragraph{Tightening lower order terms.} Both Theorem~\ref{thm:n2const} and Theorem~\ref{thm:linear-const} construct codes that are not optimal in lower-order terms. It may be possible to construct stream coding schemes of exactly length $O\left(\frac{n^2}{s(n)}\right)$ and stream coding schemes for linear functions of length $O(n)$. This is an interesting direction for future work.

Specifically, in the case of linear functions, it may be possible to construct constant rate codes. Interestingly, we can pose this question for an even more restrictive class of functions than linear functions: the class of index functions. These are the functions $f_i(x)=x_i$ for all $i$. We do not even know if constant rate stream coding schemes exist here, when $s(n)$ is sufficiently small, say $\polylog(n)$.

One simple strategy is to encode with a locally decodable code requiring $Q=s(n)^{O(1)}$ queries to recover an index. The decoder can then ignore all the indices except the $Q$ they need to recover their target index, and in $\poly(Q)=s(n)$ space, recover any individual index of the message. Indeed, for our constructions in both Theorem~\ref{thm:n2const} and Theorem~\ref{thm:linear-const}, this procedure is an important primitive. Unfortunately, the best locally decodable codes for polylogarithmic locality have super-linear encoding length~\cite{Yekhanin11}, and so are not constant rate.

However, this is not necessarily the only way. Barring a constant rate construction of $\polylog(n)$-query locally decodable codes, can we construct constant rate stream decoding schemes? We remark that if one removes the streaming requirement and only requires that the decoder be low space with arbitrary queries, constant rate codes are known~\cite{Spielman96,GuruswamiK08}.
\vspace{-2ex}
\paragraph{Improvements to the lower bound.} We will discuss a few potential strengthenings to Theorem~\ref{thm:n2lower}.

The work of~\cite{GuptaZ23} initially proposes the model of stream coding schemes where the decoder need only output $f(x_1\ldots x_n)$, for an arbitrary choice of Boolean function $f$ that can be computed by receiving $x_1\ldots x_n$ in a stream in $s(n)$ space. The simplest way to accomplish this task is to compute $x_1\ldots x_n$ in order and perform the streaming computation of $f$ as each bit is discovered. Our lower bound shows that this method requires encoding length $\Omega\left( \frac{n^2}{s(n)} \right)$, but there could be a different way. Nonetheless, we conjecture that the lower bound of $\Omega\left( \frac{n^2}{s(n)} \right)$ encoding length holds for any stream coding scheme for the class of all Boolean functions computed by $s(n)$-space streaming algorithms. 

Secondly, our lower bound in Theorem~\ref{thm:n2lower} only disproves stream coding schemes where the decoder outputs $x$ with probability $1-\frac{1}{\poly(n)}$. However, random guessing only outputs $x$ correctly with probability $\frac{1}{2^n}$. We conjecture that a stream coding scheme requires $\Omega\left( \frac{n^2}{s(n)} \right)$ space to output $x$ even if we only require the success probability to be $\frac{1}{\poly(n)}$.

\subsection{Related Work}

Aside from the connection to~\cite{GuptaZ23}, we discuss the relation of our work different models of error-correcting codes and to streaming algorithms.

\medskip\noindent\textbf{Efficiency of error-correction algorithms.}
Our work explores the model of stream decodable error-correcting codes where the decoder must be low-space and read the codeword bits in a single pass. Without the latter restriction, it is known how to construct asymptotically good codes for which a \emph{logspace} decoder can correct a constant fraction of errors, and in fact one can also have a logspace encoder~\cite{Spielman96,GuruswamiK08}. Note that the decoder is not restricted to a single pass on the codeword. Since one typically receives a communicated codeword as a stream, a natural question is whether such results extends to low space decoding in a stream.
This is our focus, and we show that for error-correction in this setting, one cannot have codes of constant rate, and in fact a near quadratic blow-up in message length becomes necessary. 

\medskip\noindent
\textbf{Codes against streaming channels.} 
Streaming in the context of error-correction has previously been considered for \emph{channels} which are low-space (say logarithmic in the code length) and cause errors in an online fashion as they read the input codeword in a single pass. This was part of a more general line of work on coding against computationally bounded channels whose systematic study was initiated in \cite{GuruswamiS16}. List-decodable codes with optimal rate against such channels were constructed in \cite{ShaltielS21a} for all error fractions $p \in [0,1/2)$, and their decoding time improved in \cite{KoppartySS19}. More recently and surprisingly, even unique-decodable codes with optimal rate (for at most  fractions $p <1/4$ of errors caused by a streaming channel) were constructed in \cite{ShaltielS21b}. 

There is also beautiful work on causal channels, which must read the codeword in one pass, but there is no space restriction~\cite{CJL-stoc15}. In contrast, our work is in the model where the \emph{receiver} is the computationally bounded party. 

Additionally, the authors of \cite{Franklin15} consider a related version of the problem we consider, where the encoder also receives the message as a stream rather than all at once, but the encoder and decoder are permitted shared randomness. In this setting, they show that it is possible to achieve a constant-rate encoding with any constant fraction of errors less than 1.

\medskip\noindent\textbf{Locally decodable codes.} One specific type of error-correcting codes related to our result is that of \emph{locally decodable codes}. Locally decodable codes \cite{Yekhanin12} can be viewed as a low-time and low-space version of error-correcting codes, where the goal is to learn a single bit of the original message. In constrast, for us, the decoder must be able to piece the entire stream with the tradeoff that the decoder accesses the entire encoding via a stream rather than via query access. Locally decodable codes have been constructed in a variety of different parameter regimes, including constant query~\cite{Efremenko09, DvirGY11} and rates approaching $1$~\cite{KoppartySY14}. In our work, we will use Reed-Muller codes~\cite{Muller54, Reed54} that achieve $\polylog(n)$ query complexity and slightly super-linear block length.

As discussed in Section~\ref{sec:open}, our work also connects to locally decodable codes as a relaxation of the query model. Specifically, $q$-query locally decodable codes are $\poly(q)$ space stream coding schemes for the family of linear functions (as long as the locally decodable code permits $\poly(q)$ decoding space). Thus, our model can be viewed as a simpler setting than local decoding in which to construct high rate codes. In particular, the existence of constant rate stream decodable codes for index functions may be easier to resolve than constant rate polylogarithmic-query locally decodable codes.

\medskip\noindent\textbf{Streaming Algorithms.}
The algorithms in this work are \emph{streaming algorithms} for processing noisy encoded communication.

Streaming algorithms are a prolific field of research with algorithms for a multitude of problems, including approximate counting~\cite{Morris78} on approximate counting, heavy hitters~\cite{CharikarCF02}, $\ell_p$ approximation~\cite{AlonMS96, MonemizadehW10, IndykW05}, and identifying a nonzero entry in a vector (for turnstile algorithms)~\cite{MonemizadehW10}. Many works, for example \cite{Garg21, Chen16, Ma21, BenJWY22}, also consider the problem of processing \emph{noisy} data using streaming algorithms. \cite{Garg21} shows a memory lower bound for learning a parity function with noisy samples of random linear equations.

However, the typical streaming setting is quite different from our setting. The algorithms mentioned above are used to process adversarial or ``random'' data. Our algorithms on the other hand process carefully formatted streams in the presence of communication noise, rather than sample or data noise. Our streaming algorithms are for processing formatted communication rather than processing data.
\section{Overview of techniques} \label{sec:overview}

\subsection{Stream-decodable codes of near-quadratic length}
\label{subsec:overview-quad-const}
We start by describing our construction of \cref{thm:n2const}, in the case where $s = (\log n)^t$ for sufficiently large $t$ (the other case is very similar).

\medskip\noindent\textbf{Achieving constant error rate.}
First, we will describe how to achieve a code which is resilient against a sufficiently small constant error rate (say $0.01$). Take a Reed-Muller code $\LDC: \{0, 1\}^n \to \{0, 1\}^N$ with appropriate parameters which is locally correctable and locally decodable up to constant fraction of errors (say a $0.2$ fraction). Alice's encoding is just $y=\LDC(x)$ repeated $10n$ times, for a total length of $10nN$. This is the same encoding used by~\cite{GuptaZ23}.

Next, we describe Bob's decoding algorithm. Bob's goal is to output one more bit $x_i$ of the stream after each received block (copy of $y$) where there is less than $0.1$ corruption. Doing this requires two things: (1) that he can decode any single bit of $x$ from a block with $<0.2$ corruptions and (2) that he can detect whether a block had sufficiently little (around $0.15$ or less) fraction of corruptions. The first goal is simple: while receiving a copy of $y$, Bob uses local decoding to figure out the next bit of $x$ and will be correct with high probability if the corruption was lower than $0.2$. Then, he outputs his guess if the block passes the check for (2).

Now, we explain how Bob achieves (2). Essentially, Bob will use the first $5n$ (corrupted) blocks to find an ``estimate'' of the true value of $y$ (not doing (1) at all during this phase), and for each of the second $5n$ blocks, checks if this estimate matches the received block well enough (indicating low corruption), and if so, he does (1). More precisely, he picks $v = (\log n)^2$ uniformly random indices $j_1, \dots, j_v \in [N]$. Then, for the first $5n$ (corrupted) blocks, he locally corrects $y$ to find the values of $y_{j_1}, \dots, y_{j_v}$. Since the total number of errors is at most $0.1nN$, the local correcting must succeed for most copies of $y$, so taking the majority for each index, he can (with high probability) get the true values $y_{j_1}, \dots, y_{j_v}$. For each of the second $5n$ blocks, he checks how many of the bits match $y_{j_1}, \dots, y_{j_v}$. Since $j_1, \dots, j_v$ were chosen uniformly randomly and independently from the corruption, the proportion of errors in the checked bits must (by a Chernoff bound) be within $0.01$ of the true proportion of errors in this copy of $y$. Whenever the error estimated in this way is less than $0.15$, he outputs the guess for $x_i$ computed in this block. Since the total number of errors across the $5n$ copies of $y$ was at most $0.1nN$, the number of copies with at most $0.15$ fraction of errors is more than $n$, so by the end of the process we have outputted all the bits. 



The total size of the encoding is $10nN$. It turns out that we can find a Reed-Muller code with the desired properties for $N = n^{1 + O(1/t)}$, so this size is then $n^{2+O(1/t)}$, as desired.

\medskip\noindent\textbf{Bringing the error resilience up to $1/4-\e$.}
Next we outline the changes we need to make in order to bring the error tolerance up to the optimum of $1/4-\e$. 

The first change is that we use \textit{smooth decoding} to figure out the true values of $y_{j_1}, \dots, y_{j_v}$. In smooth decoding, the decoding algorithm outputs confidences of the target bit being either 0 or 1 (and the confidences add to 1). This notion was introduced as ``decoding with confidences'' in \cite{GuptaZ23}. The required lower bound in the confidence in the correct bit varies linearly by the fraction of errors, and is (roughly) 1 with no errors and 0 with a $1/2$ fraction of errors. Then, instead of tracking the majority guess for each $y_{j_t}$, Bob tracks the sum of the confidences and takes their majority. In this way, when the fraction of errors is less than $1/4-\e$, Bob will still eventually get the correct value for each $y_{j_t}$, though this point may occur very late in the stream (possibly even after up to a $1-O(\e)$ fraction of the copies of $y$ have been read).

It turns out that after this step, the proportion of errors remaining in the rest of the stream is at most $1/2-\Omega(\e)$. Now Bob will need to decode the bits of $x$ with these corrupted copies of $y$. He can use the $y_{j_1}, \dots, y_{j_v}$ as before to determine which copies do in fact have at most a $1/2-\Omega(\e)$ fraction of errors. However, the error threshold for unique decoding of $x$ is always at most $1/4$ for any (polynomial-rate) error-correcting code, so a priori it is not directly possible to get the bits of $x$ from these corrupted copies. However, the error threshold for \textit{list-decoding} can be up to $1/2$, and it turns out that we can also do a similar operation called \textit{unique local decoding with advice}. This means that we can locally decode any bit of $x$ given the true values of $y_{j'_1}, \dots, y_{j'_u}$ for $j'_1, \dots, j'_u$ drawn from some fixed distribution, not depending on which bit of $x$ we want. (After knowing $y_{j'_1}, \dots, y_{j'_u}$, we will still need to read polylogarithmically many more bits of $y$.) Moreover, this local decoding with advice has an error threshold of $1/2-O(\e)$. Thus, we can sample $j'_1, \dots, j'_u$ at the start, and figure out $y_{j'_1}, \dots, y_{j'_u}$ concurrently with figuring out $y_{j_1}, \dots, y_{j_v}$.

These changes combined yield a stream-decodable code that works up to a $1/4-\e$ fraction of errors, which is the optimal threshold for unique decoding binary codes.

\subsection{Quadratic lower bound for stream-decodable codes}
\label{subsec:overview-lb}
Next, we sketch our proof of \cref{thm:n2lower}, which states (roughly) that a stream-decodable coding scheme with decoding space $s$ resilient against any constant error proportion must use $\Omega(n^2/s)$ encoded bits. Suppose otherwise, and that a coding scheme with code length $o(n^2/s)$ exists. We will show how an adversary may corrupt an $o(1)$ proportion of the encoding to cause the decoder to fail.

Let the input $x \in \{0, 1\}^n$ be arbitrary, and split the encoding $\enc(x)$ into $k = o(n/s)$ ``blocks,'' each with $\ell = o(n)$ bits. (In reality, we will have $k=\Theta(n/s)$ and $\ell=\Theta(n)$ with very small constant factors, but for intuition it is helpful to think about them this way.) Then, we will consider what the algorithm $\dec$ might output during each block. Intuitively, this block should have at most $\ell$ bits of information, and the memory contents of the algorithm beforehand consist of $s$ bits. Though these $s$ bits may depend arbitrarily on $x$ in a way that the adversary cannot control, the $\ell$ bits are fixed, and should only allow the decoder to figure out approximately $\ell$ bits of $x$. Formalizing this intuition, we prove, roughly speaking, that for most $x$, for each block $i$, there is a set $S_i$ of size $O(\ell)$ such that the algorithm $\dec$ almost always outputs only indices (and their corresponding bits of $x$) from $S_i$ during that block, except possibly up to $O(s)$ other indices. The precise formulation of this statement is \cref{lemma:lb-1}.

To prove this lemma, we simulate the algorithm many times on block $i$, with $r=\ell/s$ different choices of memory contents $a_i^{(1)}, \dots, a_i^{(r)}$ given to it at the start of the block. Let $T_j$ be the set of indices that are output by the decoding algorithm on block $i$ when given $a_i^{(j)}$. We imagine for now that $T_j$ is deterministic; the formal proof in \cref{sec:lower} will show how to get away with this assumption. Roughly speaking, all of the index-bit pairs output over all these simulations (that is, the union of the $T_j$) must accurately match $x$, as long as the $a_i^{(j)}$ were all \textit{good} (meaning that they had high probability of success). However, the result of these simulations depends only on the choices of $a_i^{(j)}$, as well as the contents of the block, which consist of $O(\ell)$ bits total. Thus, we have a collection of index-bit pairs of $x$ with entropy at most $O(\ell)$, so their size is at most $O(\ell)$.
Now, note that we can vary the $a_i^{(j)}$ to take any collection of good memory values. We showed that under any such assignment, the union of the resulting $T_j$ has size at most $O(\ell)$. In fact, a simple combinatorial argument shows that this means that there must be some fixed set of size $O(\ell)$ which contains all but $O(\ell/r) = O(s)$ elements of each possible value of $T_j$, thus proving the lemma.

Having proven this lemma, we now have a set $S_i$ of indices of size $O(\ell)$ that the algorithm can output during each block, in addition to at most $O(s)$ other indices. We then pick a uniformly random index $j \in [n/2]$ and assign to each block $i$ the interval $[j + c(i-1)s, j + cis)$, where $c$ is a sufficiently large constant. Then, for each block such that $S_i$ contains at least half the indices in that interval, the adversary sets that block to 0s thereby effectively erasing it. Since $|S_i| = O(\ell)$, each block is deleted with probability only $o(1)$ (over the randomness of $j$), so the number of deleted blocks is $o(1)$ (with probability $1-o(1)$). Then, the algorithm can never output all the elements of a block's interval during that block, since that would require outputting $cs/2$ bits not in $S_i$, which is not possible by the lemma. Thus, $\dec$ must always stay ``behind'' the intervals, and thus cannot reach the end of the input $x$.

\subsection{Stream decodable codes for linear functions of near-linear length}
\label{subsec:overview-linear-func}
For this algorithm, we'll mostly focus on constructing a length $O(n)$ code that uses $\approx \sqrt{n}$ space for stream decoding linear functions. Then, we'll give a high level outline of how to recurse our construction and reduce the space complexity to $\polylog(n)$ space at a slight cost to rate. Let Bob's secret linear function be $\ell$ (so that he wants to compute $\ell\cdot x := \ell_1x_1,\ldots,\ell_nx_n$).

\medskip\noindent\textbf{An $O(n)$ length code that uses $\approx \sqrt{n}$ space.}
    The simplest approach Bob might take to compute a linear function is to deduce $x_1,\ldots,x_n$ in order and compute $\ell_ix_i$ and add it to a running total. Unfortunately, this can't work because it runs into the lower bound of Theorem~\ref{thm:n2lower}. 
    
    Instead, Bob will try to compute the linear function in batches, or ``blocks,'' of size $\sqrt{n}$. Let us split the input into the $\sqrt{n}$ blocks $x^{1}=x_1,\ldots,x_{\sqrt{n}}~,~ x^{2}=x_{\sqrt{n}+1},\ldots,x_{2\sqrt{n}}$ and so on. Bob's goal will be to compute $\ell^{i}\cdot x^{i}$ for each $i\in [\sqrt{n}~]$.

    As a first attempt, let Alice's encoding be $\ECC(x^{1})\ECC(x^{2})\ldots$. In each block, Bob can decode $x^{i}$ and compute $\ell^{i}\cdot x^{i}$ which he adds to a running total. Unfortunately, the adversary can concentrate errors on a single $\ECC(x^{i})$, and Bob will decode incorrectly, messing up his final output.

    To address this, Alice will put an outer error-correcting code $\LDC$ on her message $\ECC(x^{1})\ECC(x^{2})\ldots$, so that she sends the message $y=\LDC \otimes \ECC(x)$. Both $\LDC$ and $\ECC$ are linear codes, so the encodings commute. Visually, picture her message bits and encoded bits as follows:
    
\begin{center}
\begin{tikzpicture}
  \def \n {4}
  \def \radius {0.1cm}
  \foreach \i in {1,...,\n} {
    \foreach \j in {1,...,\n} {
      \fill (\i,\j) circle (\radius);
    }
  }
  \foreach \x [count=\xi] in {1,2,...,\n} {
    \node [scale=0.8, anchor=north] at (\xi+0.3,\n+0.5) {$x^1_{\x}$};
  }
  \foreach \x [count=\xi] in {2,3,...,\n} {
    \node [scale=0.8, anchor=north] at (1.3,\n-\xi+0.5) {$x^{\x}_{1}$};
  }
  
  \draw[->, thick, line width=1.5pt, >=stealth] (\n+1,2.5) -- (\n+4,2.5) node[midway, above] {$\LDC \otimes \ECC$};

  \begin{scope}[shift={(8,0)}]
  \def \m {5}
    \foreach \i in {1,...,\m} {
      \foreach \j in {1,...,\m} {
        \fill[blue] (\i,\j-1) circle (\radius);
      }
    }

    \draw[decorate,decoration={brace,amplitude=5pt},xshift=-0.4cm,thick] (1,\m-0.5) -- (\m+1,\m-0.5) node[midway, above=0.2cm] {$\ECC$};
    
    \draw[decorate,decoration={brace,amplitude=5pt},yshift=0.4cm,thick]  (\m+0.5,\m-1) -- (\m+0.5,-0.8) node[midway, right=0.2cm] {$\LDC$};

  \foreach \x [count=\xi] in {1,2,...,\m} {
    \node [scale=0.8, anchor=north] at (\xi+0.3,\m-0.5) {$y^1_{\x}$};
  }
  \foreach \x [count=\xi] in {2,3,...,\m} {
    \node [scale=0.8, anchor=north] at (1.3,\m-\xi-0.5) {$y^{\x}_{1}$};
  }

  \end{scope}
\end{tikzpicture}
\end{center}

The bits are sent left to right and then top to bottom. If Bob wants to compute $\ell^i\cdot x^i$ from the unencoded word $x$, he would only need to access bits of the $i$'th row. In the encoded word $y$, he can compute $\ell^i\cdot x^i$ by the following steps:
\begin{enumerate}[label=(\arabic*)]
    \item He finds a list of indices $q^i_1\ldots q^i_k$ (because the outer code is locally decodable, $k$ is small, which we will analyze later) that would recover the $i$'th index of a message encoded by $\LDC$. 
    \item He queries the $q^i_1\ldots q^i_k$'th rows of the encoded message $y$. 
    \item \textbf{Column decoding:} He uses the local decoding algorithm on each index of the rows to recover each index (with corruption) of $\ECC$.
    \item \textbf{Row decoding:} He decodes each $\ECC$ to recover row $i$ of $x$. 
    \item \textbf{Linear function:} He computes the linear function $\ell^i\cdot x^i$. 
\end{enumerate}

Because both dimensions are encoded by an error-correcting code, this is resilient to error. However, this is not actually space efficient. For any fixed index $i$, even if $k$ is small (say $\approx \polylog n$), Bob could potentially need to store all $k\cdot |\ECC| \approx \sqrt{n}$ bits all the way until the end when he gets the final row necessary to compute $\ell^ix^i$ in this manner. Since there are $\sqrt{n}$ rows, this requires $\approx \sqrt{n}\cdot \sqrt{n}$ memory. 

However, because $\ECC$ and $\LDC$ are linear, Steps (3), (4) and (5) commute (we remark that this is true as long as encoding is linear, even though decoding in the face of errors is not). In particular, he can perform them in the order (4), (5), (3). Specifically, Bob can decode each row immediately upon receiving it, then compute the linear function $\ell^j$ on each row, and only at the end use the $\LDC$ decoding algorithm on the singular bit stored for each row to compute a guess for $\ell^i \cdot x^i$.

The reason this saves space is that between rows, Bob only needs to store $k$ bits of memory per index $i$ rather than $k\sqrt{n}$ bits of memory. Since there are $\sqrt{n}$ indices, he'll need $\approx k\sqrt{n} \approx \sqrt{n}$ space.

\medskip\noindent\textbf{Recursing the construction.}
    In the current construction, Bob splits his string into $\sqrt{n}$ blocks of size $\sqrt{n}$ and computes the linear function he needs of each block separately. Instead, he could split into $n^{1/3}$ blocks of size $n^{2/3}$, and then he would need to compute a linear function on each of the blocks of size of size $n^{2/3}$. To do this, we could iterate the construction again on each of the blocks, allowing him to compute each linear function in $\approx n^{1/3}$ space instead (picture a $3$-dimensional grid instead of a $2$-dimensional grid).

    We can continue to recurse this, and the code Alice uses is $\LDC^{\otimes d}(x)$ if we recursed $d$ layers. If we recursed this, say $d=\log n/\log\log n$ times, the block length would be $\log n$. Then, at the lowest layer, Bob only needs $\log n$ space to decode the code and compute any linear functions. Of course, there is still additional space consumption from the recursive overhead and the number of small blocks. Although we won't calculate the space usage here, it is possible to get it down to $\polylog(n)$ with a slight increase in rate. 
    
    One difficulty to be aware of is that the distance of the tensor code $\LDC^{\otimes d}(x)$ is the $d$'th power of the distance of $\LDC$. This means that we will need $\LDC$ to have relative distance $1-O(1/d)$ instead of $1/2-\eps$ as was the case for most codes we have been describing so far. To deal with this, we have to embed into a larger alphabet, and also in many regimes including $d=\log n/\log\log n$, use a code where the relative distance is $1-o(1)$.

\section{Preliminaries} \label{sec:prelims}

\paragraph{Notation.}
\vspace{-2ex}
\begin{itemize}
\itemsep=0ex
    \item The function $\log$ is in base $2$ unless otherwise specified.
    \item The set $[n]$ denotes the integers $1\ldots n$.
    \item Given a tuple $T$ and element $i$, the expression $T|i$ denotes $i$ concatenated to $T$.
    \item The phrase ``with high probability in $n$'' means with probability at least $1-\frac{1}{n^{\omega(1)}}$.
    \item We use $\Delta(x,y)$ to denote the Hamming distance between two strings $x,y\in (\Sigma \cup \bot)^n$, and $\delta(x,y)$ to denote the relative distance between them (i.e. $\delta(x,y)=\frac1n\cdot \Delta(x,y)$). Any element of $\Sigma$ is considered distance $\frac12$ from $\bot$.
    \item For clarity, we will often omit floor and ceiling signs where they would technically be necessary.
\end{itemize}

\begin{lemma}[Tail bound for $k$-wise independent random variables]
\label{lem:k-wise-tail}
Let $k > 4$ be an even integer. Suppose 
$X_1,X_2,\dots,X_n$ are $k$-wise independent
random variables taking values in $[0, 1]$. Let $Z = \sum_i X_i$ and $\mu =\E[Z]$. Then 
\[ \mathrm{Pr} [ |Z-\mu| \ge A] \le 8 \cdot \Bigl( \frac{k \mu + k^2}{A^2}\Bigr)^{k/2}  \ . \]

\end{lemma}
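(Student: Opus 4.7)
The plan is to apply Markov's inequality to the $k$-th central moment and then bound $\E[(Z-\mu)^k]$ using the structure of $k$-wise independence. Since $k$ is even, $(Z-\mu)^k \ge 0$, so Markov yields $\Pr[|Z-\mu| \ge A] \le \E[(Z-\mu)^k]/A^k$, and it suffices to prove $\E[(Z-\mu)^k] \le 8(k\mu + k^2)^{k/2}$. Setting $Y_i = X_i - \E[X_i]$, we have $|Y_i| \le 1$, $\E[Y_i] = 0$, and $\sigma_i^2 := \E[Y_i^2] \le \E[X_i]$ (since $X_i \in [0,1]$); hence $\sum_i \sigma_i^2 \le \mu$.

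The next step is to expand $\E[(Z-\mu)^k] = \sum_{(i_1,\ldots,i_k) \in [n]^k} \E[Y_{i_1}\cdots Y_{i_k}]$ and group each index sequence by the set partition $\pi$ of $[k]$ induced by equality of indices, together with the injection $\phi$ from blocks of $\pi$ to $[n]$ giving the common value. By $k$-wise independence, $\E[Y_{i_1}\cdots Y_{i_k}]$ factors as $\prod_{B \in \pi} \E[Y_{\phi(B)}^{|B|}]$, and any singleton block $B$ contributes a factor $\E[Y_{\phi(B)}] = 0$. Hence only partitions whose every block has size $\ge 2$ (forcing $|\pi| \le k/2$) survive. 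Using $|Y_i| \le 1$ to obtain $|\E[Y_i^{|B|}]| \le \E[Y_i^2] = \sigma_i^2$ for $|B| \ge 2$, and upper-bounding the sum over injections by the sum over all functions from blocks to $[n]$, one gets $\E[(Z-\mu)^k] \le \sum_{d=1}^{k/2} P(k,d)\,\mu^d$, where $P(k, d)$ denotes the number of set partitions of $[k]$ into $d$ blocks of size $\ge 2$.

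To bound $P(k,d)$, I would use the overcounting estimate
\[
P(k,d) \le \binom{k}{2d}\,(2d-1)!!\cdot d^{k-2d} \;=\; \frac{k!\,d^{k-2d}}{(k-2d)!\,2^d\,d!},
\]
obtained by choosing $2d$ seed elements, pairing them to tag the $d$ blocks, and freely distributing the remaining $k - 2d$ elements among the $d$ blocks. The final step is to verify
\[
\sum_{d=1}^{k/2} \frac{k!\,d^{k-2d}\,\mu^d}{(k-2d)!\,2^d\,d!} \;\le\; 8\,(k\mu+k^2)^{k/2};
\]
expanding the right-hand side via the binomial theorem as $\sum_d \binom{k/2}{d} k^{k-d}\mu^d$ and invoking estimates like $k!/(k-2d)! \le k^{2d}$ together with Stirling-type bounds on $d!$ allows each term on the left to be absorbed into a constant multiple of the corresponding term on the right. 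Plugging the resulting moment bound into Markov then gives the claim.

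The main obstacle is this last combinatorial comparison: the dominant term in $\sum_d P(k,d)\mu^d$ depends on the ratio $\mu/k$, and the form $k\mu + k^2$ on the right reflects exactly this, with $k\mu$ dominating when $\mu \ge k$ (a Marcinkiewicz--Zygmund-like regime coming from the pairing-dominated partitions $d \approx k/2$) and $k^2$ dominating for small $\mu$ (where the partitions with few, larger blocks contribute). A case split based on the size of $\mu$ relative to $k$, with careful bookkeeping, is needed to keep the final constant at $8$.
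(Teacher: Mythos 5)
The paper does not actually prove this lemma --- it is stated in the preliminaries as a known result (it is the standard $k$-wise independence tail bound of Bellare--Rompel / Schmidt--Siegel--Srinivasan), so there is no in-paper argument to compare against. Your reconstruction is the canonical moment-method proof, and its structural steps are all sound: Markov applied to $(Z-\mu)^k$ for even $k$; the factorization of $\E[Y_{i_1}\cdots Y_{i_k}]$ over the partition induced by index coincidences (valid since at most $k$ distinct indices appear, so $k$-wise independence suffices); the vanishing of partitions with singleton blocks; the chain $|\E[Y_i^{|B|}]|\le \E[Y_i^2]\le \E[X_i]$ giving $\sum_{\phi}\prod_B\sigma_{\phi(B)}^2\le\mu^{|\pi|}$; and the overcount $P(k,d)\le\binom{k}{2d}(2d-1)!!\,d^{k-2d}$, which is a correct surjection-style estimate.

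The one genuinely unfinished piece is the closing inequality $\sum_{d=1}^{k/2}\frac{k!\,d^{k-2d}\mu^d}{(k-2d)!\,2^d\,d!}\le 8(k\mu+k^2)^{k/2}$, which you flag but do not carry out, and the specific mechanism you propose --- absorbing each left-hand term into a constant multiple of the corresponding binomial term $\binom{k/2}{d}k^{k-d}\mu^d$ using only $k!/(k-2d)!\le k^{2d}$ --- does not quite work as stated: for $d$ near $k/2$ that crude bound leaves a ratio growing like a power of $k$ (e.g.\ at $d=k/2-1$ one gets roughly $k/32$ rather than a constant). The comparison does go through term by term, but only if you exploit the full strength of $\frac{k!}{2^{k/2}(k/2)!}=(k-1)!!\le\sqrt{2}\,(k/e)^{k/2}$ (i.e.\ genuine Stirling-sharp estimates, not just $(k-1)!!\le k^{k/2}$), or alternatively if you compare left-hand term $d$ against a small window of right-hand terms. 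So the approach is correct and standard, but the final bookkeeping is a real (if routine) obligation rather than an immediate absorption, and as written the proof is not yet complete.
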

\subsection{Error-correcting codes} \label{sec:ecc}

We begin with some results about error-correcting codes. We first state a theorem detailing the existence of distance $1-1/|\bbK|-\eps$ codes that are efficiently encodable and efficiently decodable. It is standard, and based on GMD decoding of concatenated codes with an outer code approaching the Singleton bound (like Reed-Solomon or algebraic-geometric codes), and a small inner code of relative distance close to $(1-1/|\bbK|)$ (see for instance \cite[Chap.14]{GuruswamiRS19}).

\begin{theorem}
\label{thm:binary-ecc}
    For every $\eps > 0$ and every finite field $\bbK$,  there exists an explicit systematic linear error-correcting code $\ECC_\eps = \{ \ECC_{\eps, n} : \bbK^n \rightarrow \bbK^m \}_{n \in \bbN}$ with relative distance at least $1-1/|\bbK|-\eps$ and $m \le n/\eps^{O(1)}$, and a $O_\eps(n^2)$-time and space decoding algorithm $\DEC_{\eps} : \bbK^m \rightarrow \bbK^n$, such that for any $x \in\bbK^n$ and $w \in \bbK^m$ satisfying $\delta(\ECC_\eps(x), w) < (1 - 1/|\bbK|)(1-\eps)/2$, it holds that $x = \DEC_\eps(w)$.
\end{theorem}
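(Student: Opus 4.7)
The plan is to construct $\ECC_\eps$ as a concatenation of an outer Reed--Solomon code approaching the Singleton bound with a short inner code approaching the Plotkin bound, and to decode via Forney's GMD algorithm, following the standard template in \cite[Chap.~14]{GuruswamiRS19}.

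In detail, set $t := \lceil \log_{|\bbK|}(n/\eps^{C_1})\rceil$ for a sufficiently large constant $C_1 = C_1(\eps)$ and let $\bbF := \bbK^t$. Viewing $x \in \bbK^n$ as an element of $\bbF^{n/t}$ (zero-padded if necessary), let $C_{\mathrm{out}} : \bbF^{n/t} \to \bbF^N$ be a systematic Reed--Solomon code of rate $R_o = \eps/8$, which has $N \le |\bbF|$ and relative distance at least $1 - \eps/8$ by Singleton. Let $C_{\mathrm{in}} : \bbF \to \bbK^b$ be a systematic $\bbK$-linear code of rate $R_i = \Omega(\eps^{O(1)})$ and relative distance at least $(1 - 1/|\bbK|)(1 - \eps/8)$; existence follows from the Gilbert--Varshamov bound, and standard derandomizations yield such a code in time $\mathrm{poly}(n)$. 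The concatenation $\ECC_\eps := C_{\mathrm{out}} \circ C_{\mathrm{in}}$ is then $\bbK$-linear, has length $m = Nb \le n/\eps^{O(1)}$, and has relative distance at least the product $(1-\eps/8)(1-1/|\bbK|)(1-\eps/8) \ge (1-1/|\bbK|)(1-\eps/2)$. It is systematic after reordering coordinates so that the positions encoding the message prefix appear first.

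For decoding a received word $w \in \bbK^m$, run Forney's GMD algorithm: for each of the $N$ inner blocks, nearest-codeword decode $C_{\mathrm{in}}$ (by brute force over $|\bbF|$ candidates) to produce a symbol $\hat y_i \in \bbF$ and a reliability weight $\omega_i \in [0,1]$ essentially equal to $1 - 2 \delta_i / (1 - 1/|\bbK|)$, where $\delta_i$ is the relative distance from the $i$-th inner block to the decoded codeword. Then, for each of $O(1/\eps)$ reliability thresholds, erase the low-weight positions and invoke an errors-and-erasures Berlekamp--Welch decoder on $C_{\mathrm{out}}$; the standard Forney analysis shows that at least one threshold recovers the correct outer codeword provided fewer than $(1-1/|\bbK|)(1-\eps)/2 \cdot m$ channel errors occurred, and $x$ is then read off from the systematic positions. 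The running time and space are dominated by $O(1/\eps)$ invocations of $O_\eps(N^2)$-time Berlekamp--Welch, giving $O_\eps(n^2)$ overall.

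The main item to sanity-check is that Forney's GMD analysis, usually stated in the binary $1/2$-Plotkin regime, carries over unchanged to our $(1 - 1/|\bbK|)$ setting. This is immediate because the weighted-distance inequality driving the argument depends only on the product of the outer and inner relative distances, not on their individual values, so substituting the parameters above yields the claimed decoding radius. Secondary details --- systematicity of RS and of a GV-style inner code, polynomial-time construction of the inner code (e.g.\ via Wozencraft-style ensembles over $\bbK$), and the $O(N^2)$-time Berlekamp--Welch bound --- are all classical and require no new ideas.
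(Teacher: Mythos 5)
Your proposal is correct and is essentially the same argument the paper intends: the paper gives no proof of this theorem, instead citing the standard construction via concatenation of an outer Reed--Solomon code near the Singleton bound with a short inner code near the $q$-ary Plotkin bound, decoded by Forney's GMD algorithm, which is exactly what you carry out. The parameter choices and the observation that the GMD analysis depends only on the product of the outer and inner designed distances are both sound.
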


Our constructions will use Reed-Muller codes (based on evaluations of multivariate polynomials) concatenated with the codes from Theorem~\ref{thm:binary-ecc}. 
In order to locally decode these codes, we will correct them along lines for which we would need to run list decoding algorithms for concatenated codes with outer Reed-Solomon codes. The following list decoding result for such codes is standard, and based on list-decoding the inner codes by brute force and list-recovering the outer Reed-Solomon codes; see for example~\cite{GuruswamiS00}. (Better parameters are possible, but having $\poly(\eps)$ rate and $\poly(1/\eps)$ output list size suffices for our purposes.)

\begin{theorem}
\label{thm:concat-list-decode}
 Let $\eps> 0$. Let $\C$ be a concatenated code with outer Reed-Solomon code over $\F_q$ of rate $(\frac\eps4)^4$ and an inner code of relative distance at least $\tfrac{1}{2} -\tfrac{\eps^2}{16}$. Then $C$ can be list-decoded in $\poly(q)$ time from a fraction $(1-\eps)/2$ of errors with an output list size of $64/\eps^3$.
\end{theorem}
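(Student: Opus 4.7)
\medskip\noindent\textbf{Proof proposal.} The plan is to decode the concatenated code $\C$ in two stages, following the standard template for list decoding of concatenated codes: first, brute-force list-decode each inner block to produce a short list of candidate outer symbols at each position; then, apply Guruswami--Sudan list recovery to the outer Reed--Solomon code. Both components are classical (see \cite{GuruswamiRS19, GuruswamiS00}); the real work is just to track the constants so the final output list has size at most $64/\eps^3$.

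For the inner stage, write the received word as $w = (w_1, \dots, w_N)$ with each $w_i$ one (corrupted) inner block, and fix the threshold $\tau_{\mathrm{in}} := \tfrac12 - \tfrac{\eps}{4}$. For each $w_i$, enumerate all $q$ possible outer symbols, encode under the inner code, and include in a list $S_i$ every symbol whose inner encoding lies within relative Hamming distance $\tau_{\mathrm{in}}$ of $w_i$. Since the inner code has relative distance $d \ge \tfrac12 - \tfrac{\eps^2}{16}$, the value $\tau_{\mathrm{in}}$ lies strictly below the binary Johnson radius $\tfrac12 - \tfrac{\eps}{4\sqrt{2}}$, and the Johnson bound $|S_i| \le d/(d - 2\tau_{\mathrm{in}}(1-\tau_{\mathrm{in}}))$ gives $|S_i| \le L_{\mathrm{in}} := 8/\eps^2$. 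This stage runs in time $\poly(q)$ per block and $\poly(q)$ in total.

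For the outer stage, I would first use a Markov argument to lower bound the fraction of ``good'' blocks. Since the total error fraction is at most $(1-\eps)/2$, the fraction of blocks with local error exceeding $\tau_{\mathrm{in}}$ is at most $\tfrac{(1-\eps)/2}{\tau_{\mathrm{in}}} = \tfrac{1-\eps}{1-\eps/2}$, so at least $\gamma \ge \eps/2$ fraction of blocks are good, meaning their local error is $\le \tau_{\mathrm{in}}$ and hence the true outer symbol is in $S_i$. Feeding the lists $(S_i)_{i\le N}$ into Guruswami--Sudan list recovery for the Reed--Solomon code of rate $R = (\eps/4)^4$ outputs every codeword agreeing with at least $\alpha N$ positions, provided $\alpha > \sqrt{R L_{\mathrm{in}}}$. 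Here $\sqrt{R L_{\mathrm{in}}} = \eps/(4\sqrt{2}) < \eps/2$, so $\alpha = \gamma$ qualifies; the standard output-list bound $O(\sqrt{L_{\mathrm{in}}/R})$ comes out to $\le 64/\eps^3$ under our parameters, and the algorithm runs in $\poly(N, L_{\mathrm{in}}, q) = \poly(q)$ time.

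The main ``obstacle'' is really just the bookkeeping --- verifying that the Johnson list size, the $\sqrt{R L_{\mathrm{in}}}$ agreement threshold, and the Guruswami--Sudan output list size all line up cleanly with the slack built into the parameters $R = (\eps/4)^4$ and $d \ge \tfrac12 - \eps^2/16$ to give a final output list of size at most $64/\eps^3$. No new ideas beyond the classical concatenated list-decoding paradigm appear to be needed.
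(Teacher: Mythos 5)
Your proposal is correct and follows exactly the route the paper intends: the paper does not spell out a proof but explicitly attributes the theorem to the standard paradigm of brute-force list-decoding the inner blocks (with the Johnson bound controlling the inner list size) followed by Guruswami--Sudan list recovery of the outer Reed--Solomon code. Your constant-tracking (inner list size $8/\eps^2$, good-block fraction $\ge \eps/2$ via Markov, agreement threshold $\sqrt{RL_{\mathrm{in}}}=\eps/(4\sqrt2)<\eps/2$, output list $32\sqrt2/\eps^3\le 64/\eps^3$) checks out.
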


\section{Locally decodable/correctable codes}
In this section, we introduce the locally decodable/correctable codes that will form the backbone of the constructions in our paper. There are two theorems we require, one for Section~\ref{sec:n2-const} and one for Section~\ref{sec:linear}. Each of our codes will require a feature besides just correctness of local decoding/correcting when the distance to a codeword is small.

Our first code for binary alphabets has two additional requirements. It requires that the decoder output a probability of each output $0$ and $1$ rather than only one of them (smoothness). This requirement is similar to list decoding with confidences from~\cite{GuptaZ23}, and we adapt their proof below. Secondly, it has a local decoding with advice guarantee. To establish this notion, we use ideas similar to \cite{SudanTV99} and the locally list-decodable codes of~\cite{GoldwasserGHKR07}.

Our second code requires only a smoothness guarantee for local decoding. However, it is in the regime with large alphabet and non-constant $\eps$, and the code is required to be linear.

\begin{theorem} (Binary locally decodable code) \label{thm:ldc-bin}
    Fix an arbitary $\eps > 0$. Let $Q = Q(n) \in [(\log n)^{100},n]$. There is a code $\LDC : \{0,1\}^n \rightarrow \{0,1\}^{N}$ 
    that satisfies the following properties:
    \begin{itemize}
    \item {\bf Length:} 
        The length $N =  N(n) \le n\cdot (\log_Q n)^{100\log_Q n}$.
    \item {\bf Distance:} 
        For any $x \not= y \in \{0,1\}^n$, it holds that $\delta(\LDC(x), \LDC(y)) \ge \frac12 - \eps$.
        
    \item {\bf Smooth local decoding/correcting:} 
        There exists a randomized algorithm $\cA$ that on input $i\in [n]$ (resp. input $i\in [N]$) non-adaptively queries $Q$ bits of the encoding and runs in $O(Q^3)$ time and space and achieves the following guarantee. For any word $w \in \{ 0, 1 \}^N$,  with high probability in $n$, the algorithm outputs probabilities $p(b)$ for $b\in \{0,1\}$ that satisfy $p(0)+p(1)=1$ and $p(x_i)>1-2\delta(w, \LDC(x))-\eps$ (resp. $p(\LDC(x)_{i})>1-2\delta(w, \LDC(x))-\eps$).

    \item {\bf Local decoding with advice:}
        There exists a randomized algorithm that on input $i\in [n]$ queries $Q$ bits of the encoding (non-adaptively)  and runs in $\poly(Q)$ time and space, and a distribution $\cD$ on $[N]^u$ (independent of $i$) for some $u=O(Q)$ (that is, subsets of indices of size $u$), which does the following. For any word $w \in \{ 0, 1 \}^N$ satisfying $\delta(w, \LDC(x))<\frac12-\eps$, it outputs $x_i$ with high probability in $n$ when additionally given $\LDC(x)_{d_1}\ldots \LDC(x)_{d_u}$ for $d_1\ldots d_s \sim \cD$. 
    \end{itemize}
\end{theorem}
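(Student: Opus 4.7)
The plan is to instantiate $\LDC$ as a Reed--Muller code concatenated with the binary inner code from \Cref{thm:binary-ecc}. Concretely, I will take a prime power $q = \Theta(Q/\log Q)$ so that $q$ inner blocks each of length $O(\log q / \epsilon^{O(1)})$ fit within the $Q$-query budget, and set the outer code to $\mathrm{RM}(q,m,d)$ of degree $d$ in $m$ variables over $\mathbb{F}_q$, made systematic by identifying the $n$ message positions with $n$ prescribed evaluation points $P_1,\dots,P_n \in \mathbb{F}_q^m$. The degree $d$ will be a small multiple of $\epsilon^{O(1)} q$, picked so that the restriction of the code to low-degree curves still has outer rate in the regime of \Cref{thm:concat-list-decode}, and $m = (1+O(1/\log Q))\log_Q n$ will be just large enough that $\binom{m+d}{d}\ge n$. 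A direct calculation then verifies $N = q^m \cdot O(\log q/\epsilon^{O(1)}) \le n\cdot (\log_Q n)^{100\log_Q n}$, while Schwartz--Zippel combined with the inner distance gives overall distance $\ge 1/2-\epsilon$ (after absorbing a constant factor into $\epsilon$).

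For the smooth local decoding/correcting guarantee, I will sample a uniformly random direction $v\in \mathbb{F}_q^m\setminus\{0\}$ and query the inner blocks at $\{P_i+tv : t\in \mathbb{F}_q^*\}$, which form a line $L$ through the target position $P_i$. The received data on $L$ is a noisy concatenation of a degree-$d$ Reed--Solomon code with the inner code, to which I apply \Cref{thm:concat-list-decode} to list-decode up to a $(1-\epsilon/2)/2$ fraction of errors, producing $O(1/\epsilon^3)$ candidate polynomials $g_1,\dots,g_K$ for $f|_L$, and hence candidate values $g_j(t_{P_i})$ for $f(P_i)=x_i$. Since each non-target point of $L$ is uniform in $\mathbb{F}_q^m\setminus\{P_i\}$ and the $q-1$ queried points are pairwise independent, \Cref{lem:k-wise-tail} gives that the fraction of errors on $L$ is within $\epsilon/O(1)$ of the global fraction $\delta$ with high probability. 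I will then assign confidences $p(b)$ by a soft-decoding rule based on the weighted agreement of each candidate with the received word on $L$, in the spirit of the list-decoding-with-confidences framework of \cite{GuptaZ23}; tuning this rule is what produces the $p(x_i)>1-2\delta-\epsilon$ guarantee.

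For local decoding with advice, I will set $\cD$ to sample $u' = O(1)$ uniform outer positions $Q_1,\dots,Q_{u'}\in \mathbb{F}_q^m$ and reveal all bits of their inner encodings, for a total of $u = u'\cdot O(\log q/\epsilon^{O(1)}) = O(Q)$ advice bits; being noiseless bits of $\LDC(x)$, these let the decoder reconstruct $f(Q_1),\dots,f(Q_{u'}) \in \mathbb{F}_q$ exactly by decoding each inner block with no errors. On input $i$, I will sample a random parametric curve $C:\mathbb{F}_q\to \mathbb{F}_q^m$ of degree $r = u'+1$ satisfying $C(t_0)=P_i$ and $C(t_a)=Q_a$ for prescribed distinct $t_0,t_1,\dots,t_{u'}\in \mathbb{F}_q$, which leaves $m$ coefficients of the curve free. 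I then query the inner blocks at $\{C(t):t\neq t_0\}$, use \Cref{thm:concat-list-decode} to list-decode the degree-$rd$ univariate polynomial $f|_C$ up to a $(1-\epsilon)/2$ fraction of errors (our choice of $d$ keeps $rd/q$ within the required rate constraint), and select the unique candidate $g$ satisfying $g(t_a)=f(Q_a)$ for all $a$, outputting $g(t_0)=f(P_i)=x_i$.

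The main obstacle will be the disambiguation in the advice decoding: showing that over the randomness of the curve, the chance that some wrong list-decoding candidate on $C$ matches the advice at all $u'$ parameters $t_1,\dots,t_{u'}$ is $o(1)$. The argument will use that two distinct degree-$rd$ polynomials on $C$ agree at at most $rd$ of the $q$ parameters, combined with the fact that the free coefficients of the curve effectively randomize the placement of each candidate's agreement set relative to the prescribed parameters; I need $u'$ large enough that a union bound over the $O(1/\epsilon^3)$-sized list beats $(rd/q)^{u'}$ while still keeping $r=u'+1$ small enough that $rd/q$ lies in the list-decoding regime of \Cref{thm:concat-list-decode}. A secondary challenge is calibrating the soft-decoding rule in the smooth case so that the linear-in-$\delta$ confidence decay persists all the way up to $\delta = 1/2-\epsilon/2$, where a naive majority-over-the-list would already break down.
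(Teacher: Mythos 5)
Your construction (systematic Reed--Muller concatenated with the binary code of \cref{thm:binary-ecc}, curve-based local decoding, list decoding plus advice for disambiguation) matches the paper's, but there are two genuine gaps in how you get the probability guarantees. First, for smooth decoding you query a single random \emph{line} $\{P_i+tv\}$ through the target point and claim the $q-1$ queried points are pairwise independent. They are not: for a line through a fixed point, any one non-target point determines $v$ and hence all the others, so an adversary who corrupts a union of lines through a region defeats your concentration claim entirely. This is exactly why the paper samples random \emph{degree-2} curves $P(\lambda)=v_0+v_1\lambda+v_2\lambda^2$ through $v_0$, whose non-center points genuinely are pairwise independent. Second, even with pairwise independence, a single curve only gives Chebyshev-type concentration with failure probability $O(1/(\eps^2 q))$ --- inverse polynomial in $q\le n$, far from the required $1-n^{-\omega(1)}$. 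The paper repeats the curve-sampling $t=Q^{1/3}$ times and aggregates the per-curve confidences, applying Hoeffding across the $t$ independent trials; without this repetition your smooth-decoding guarantee cannot hold "with high probability in $n$." The same issue recurs in your advice decoder: you explicitly aim only for an $o(1)$ failure probability in the disambiguation step (and indeed a single curve can fail with constant probability $\eps^{\Omega(1/\eps)}$), whereas the theorem demands $1-n^{-\omega(1)}$; the paper fixes this by running $\sqrt{Q}$ independent curves through the same advice points and taking a majority vote.

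A further unresolved point is the confidence rule itself. You defer the calibration that yields $p(x_i)>1-2\delta-\eps$ to "tuning," but this is the substantive part of the smoothness claim. The paper's rule is concrete: along each degree-2 curve it \emph{uniquely} decodes the concatenated Reed--Solomon code to radius roughly $1/4-\eps$, obtaining a candidate bit $b$ at distance $\delta_P^b$, sets $\delta_P^{1-b}=\tfrac12-\delta_P^b$ (or $\delta_P^0=\delta_P^1=\tfrac14$ when decoding fails), and outputs $p(1)=\tfrac{2}{t}\sum_P\delta_P^0$, $p(0)=\tfrac{2}{t}\sum_P\delta_P^1$; the factor-2 linear decay then follows from $\delta_P^{x_i}\le\delta(\LDC(x)|_P,w|_P)$ together with the Hoeffding step. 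Your proposal of soft-weighting over a list of $O(1/\eps^3)$ candidates per line is not obviously wrong, but as you yourself note the naive version breaks down near $\delta=\tfrac12-\eps$, and no rule is actually specified, so the central inequality of the theorem remains unproved in your write-up.
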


\begin{proof}
    The locally decodable code $\LDC$ that we will be using is the concatenation of two codes. The outer code $\C_{outer}$ is a Reed-Muller code with the following parameters. Recall that the length of the message in binary is $n$, and the number of queries permitted is $Q$. We let $d=\eps^6\sqrt{Q}/4$ be the degree of the polynomials, $q=\sqrt{Q}$ be the field size and $m$ be the smallest integer such that $n \le \binom{d+m}{m}$. We assume all these variables are integers, and that $q$ is a power of $2$. The inner code $\C_{inner}$ is a linear alphabet $\bbF_2$ code of relative distance $\frac12 - \frac{\eps^6}{4}$ and rate $\Omega_\eps(1)$, as guaranteed by Theorem~\ref{thm:binary-ecc}.
    Throughout the proof, we will assume $n$ is sufficiently large compared to $\eps^{-1}$ which is treated as a constant. 

    We will split the proof into four claims, detailing the rate of the code, distance of the code, the smooth local decoding property, and the local decoding with advice property.

\begin{claim}
    The code $\LDC$ has length at most $n\cdot \log_Q(n)^{O(\log_Q(n))}$.
\end{claim}

\begin{proof}
    We will show that $m<10\log_Q n$. Since $\binom{d+m}{m}$ is increasing with $m$, it suffices to show that $\binom{d+10\log_Q n}{10\log_Q n}> n$. Noting that $Q^{0.1}>100\eps^{-6}\log n$ since $Q>(\log n)^{100}$ and that $Q$ is sufficiently large compared to the constant $\eps^{-1}$,
    \begin{align*}
        &\binom{d+10\log_Q n}{10\log_Q n} > \left(\frac{d}{10\log_Q n} \right)^{10\log_Q n} > \left(\frac{\eps^6\sqrt{Q}}{100\log_Q n} \right)^{10\log_Q n} \\
        > &\left(\frac{Q^{0.4}\log n}{\log_Q n} \right)^{10\log_Q n} > Q^{4\log_Q n} > n^{4} \ .
    \end{align*}
    
    We use this to calculate an upper bound on the length of the outer code. The expression $\frac{N(n)}{n}$ is $\frac{q^m}{\binom{d+m}{m}}$, which is upper bounded as follows:
    \begin{align*}
        &\frac{q^m}{\binom{d+m}{m}} \leq \frac{q^m\cdot m^m}{d^m} = \left(\frac{qm}{d}\right)^m = (4m\eps^{-6})^m < m^{2m}\\ 
        <~ &(10\log_Q n)^{20\log_Q n} < (\log_Q n)^{50\log_Q n} \ .
    \end{align*}
    Here, we use that for sufficiently large $m$, it holds that $m>4\eps^{-6}$.
    The inner code $\C_{inner}$ is constant rate $\Omega_\eps(1)$, and thus the overall code length is upper bounded by $n \cdot \log_Q(n)^{100\log_Q n}$.
\end{proof}

\begin{claim}
\label{clm:distance1}
    The code $\LDC$ has relative distance at least $\frac12-\eps^6$.
\end{claim}

\begin{proof}
    By the Schwartz-Zippel lemma~\cite{Schwartz80,Zippel79}, the relative distance of the outer code is at least $(q - d + 1)/q  \ge \left( 1 - \frac{\eps^6}2 \right)$. So the relative distance of the concatenated code is $\ge \left( 1 - \frac{\eps^6}{2} \right) \cdot \left( \frac12 - \frac{\eps^6}4 \right) \ge \frac12 - \eps^6$.
\end{proof}

\begin{claim}
\label{clm:smooth-unique-decoding1}
    The code $\LDC$ satisfies the  smooth local decoding property.
\end{claim}

\begin{proof}
   %
    Denote the length of the outer code as $N_{outer}$ and the length of the inner code as $N_{inner}$. We will index the symbols of the concatenated code by pairs $(v, j)$ where $v \in \bbF^m$ and $j \in [N_{inner}]$. We remark that the inner code $\C_{inner}$ and outer code $\C_{outer}$ are systematic, so local correcting is sufficient.

    The decoding algorithm is as follows. 
    Suppose we want the $i$'th index of $x$ which is located at $v_0 \in \bbF^m$ in the outer code. Suppose that it is the $a$'th index of this symbol. Then, we do the following $t = \sqrt[3]{Q}$ times: Pick a random degree-$2$ cruve $P(\lambda) = v_0 + v_1 \lambda + v_2 \lambda^2$ through $v_0$ by sampling $v_1, v_2 \gets \bbF^m$. Then, we query all symbols of $w$ located at $(P(\lambda), j)$ for $\lambda \in \bbF^*$, $j \in [N_{inner}]$ (we denote these collection of values by $w|_P$). Then the number of queries is $t \cdot (q-1) \cdot N_{inner} = t\cdot (q-1)\cdot O_\eps(\log q) < Q$. By Theorem~\ref{thm:concat-list-decode} (or by standard GMD decoding using that both the inner and outer codes are efficiently decodable), one can in $O(q^2) < O(Q^2)$ time and space find the unique degree $2d$ polynomial $h \in \bbF[\lambda]$, if it exists, such that the relative distance between $w|_P$ and the $\C_{inner}$ encodings of $h(\lambda)$ is at most $\frac12 \cdot \frac{(q - 1-2d)}{q-1} \cdot \left( \frac12 - \frac\eps2 \right) \ge 1/4-\eps$.
    
    For each of the $t$ polynomials $P$, let $b:=C_{inner}(h(0))_a$, that is, the $r$'th symbol of the $C_{inner}$ encoding of $h(0)$, representing a candidate for the $i$'th index of the codeword. We let $\delta^{b}_P$ be the relative distance between the $\C_{inner}$ encoding of $h$ and $w|_P$, i.e. the queried symbols from the $\C_{inner}$ encodings of $b$. Let $\delta^{1-b}_P=\frac12-\delta^{b}_P$, or if $h$ doesn't exist, let $b=0$ and $\delta^0_P=\delta^1_P=\frac14$. Then, the decoding algorithm outputs 
    \[
        p(0)=\frac{2}{t}\cdot \sum_P \delta^1_P \quad \text{and} \quad p(1)=\frac{2}{t}\cdot \sum_P \delta^0_P.
    \]
    These values clearly satisfy the requirement $p(0)+p(1)=1$. 

    We now show that the smooth local decoding property holds. We have that
    \begin{align*}
        \Pr& \left[ p(x_i) \le 1 - 2\delta(w, \LDC(x)) -\eps  \right] \\
        &= \Pr \left[ 1-p(x_i) \geq 2 \delta(w, \LDC(x))+\eps\right] \\
        &\le \Pr \left[ \frac{2}{t}\cdot \sum_P \delta^{x_i}_P \ge 2\delta(w, \LDC(x)) + \eps \right] \\
        &\le \Pr \left[ \frac{1}{t}\cdot \sum_P \delta^{x_i}_P \ge \delta(w, \LDC(x)) + \frac\eps2 \right].
    \end{align*}
    
    Also, we have that $\delta^{x_i}_P \le \delta(\LDC(x)|_P, w|_P)$, so that $\bbE \left[ \frac{1}{t}\cdot\sum_P \delta^{x_i}_P \right] \le \delta(w, \LDC(x))$. Then, by Hoeffding's inequality, 
        \[
            \Pr \left[ \frac{1}{t}\cdot \sum_P \delta^{x_i}_P \ge \delta(w, \LDC(x)) + \frac\eps2 \right]
            < e^{-\eps^2 t/2} \ .
        \]

    As long as $Q$ is sufficiently large relative to $\eps^{-1}$, it holds that $\eps^2t/2=\eps^2\sqrt[3]{Q}>(\log n)^2$, so the smooth local decoding succeeds with high probability. The overall space taken for all the iterations is at most $O(Q^3)$.
\end{proof}

\begin{claim}
\label{clm:advice-ld}
    The code $\LDC$ satisfies the local list decoding with advice property.
\end{claim}

\begin{proof}
Recall that we are given (oracle access to) $w \in \{0,1\}^N$ where $\delta(w,\LDC(x)) \le \frac12-\eps$, and we would like to locally decode $x_i$ with high confidence making $Q$ queries to $w$, and using as advice $s := N_{inner}/\eps$ random bits from the codeword $\LDC(x)$. These bits will be the bits at codeword positions randomly sampled according to a distribution $\cD$ on $[n]^s$ defined  as follows. We choose $k := 1/\eps$ uniformly random indices of the outer Reed-Muller code, and include all corresponding $N_{inner}$ indices of the inner code
    for a total of $s$ indices of the overall code.  
    


Our approach for decoding with advice closely follows the approach used in \cite{STV01} for locally list-decoding Reed-Muller codes.

    Let $v_1, v_2, \ldots v_k$, with each $v_i \in \F^m$, be the indices of the outer Reed-Muller code drawn from $\cD$. 
    Suppose we want to decode the $i$'th index of $x$ which is located at $v_0 \in \bbF^m$ in the outer code. Our code is systematic, so it suffices to be able to decode indices of the encoded word.
    The decoding proceeds as follows. 
    
    Choose $j_1, j_2, \ldots j_{k}$ to be random distinct points in $\F^*$ (the nonzero elements of $\F$). Let $P(\lambda)$ be the degree-$k$ polynomial curve $\F \to \F^m$ satisfying $P(0)=v_0, P(j_1)=v_1, \ldots, P(j_{k})= v_{k}$. Our goal will be to recover the degree-$dk$ univariate polynomial $h:=\C_{outer} \circ P$. (We are thinking of $\C_{outer}$ as an $m$-variate degree-$d$ polynomial here, and when composed with the curve $P$, it gives a univariate polynomial.) We query all bits of the string $w \in \{0,1\}^N$ to be list decoded that are located at $(P(\lambda), j)$ for $\lambda \in \F^*$, $j \in [N_{inner}]$ (we denote these collection of values by $w|_P$). Then the number of queries is $(q-1) N_{inner} \le q \log q \cdot \eps^{-O(1)} < Q$, since $n$ and therefore $Q$ is large enough compared to $1/\eps$.

Note that if there are no errors, $w|_P$ should equal the encoding of $h$ by code $\C^P_{concat}$ which is the concatenation of the RS code (for degree $dk$ and evaluation set $\F^*$) with $\C_{inner}$. 
We will now argue that the randomness of $P$ implies that with high probability $w|_P$ is within relative distance $\frac12-\eps$ of this encoding $\C^P_{concat}(h)$. 

For $v \in \F^m$, let $\eps_v$ denote the fraction of bits at locations $(v;j)$, $j \in [N_{inner}]$, where $w$ differs from $\LDC(x)$. We have $\rho := \E_{v \in \F^m} [\eps_v] \le (\frac12-\eps)$. Since $v_1,v_2,\dots,v_k$ are chosen at random from $\F^m$, the random variables $P(\lambda)$, $\lambda \in \F^*$ are $k$-wise independent. 
Therefore, for a random $\lambda \in \F^*$, $\eps_{P(\lambda)}$ will concentrate around $\rho$. Specifically, applying Lemma~\ref{lem:k-wise-tail} to the random variables $\eps_{P(\lambda)}$, $\lambda \in \F^*$, shows that 
\begin{equation}
\label{eq:local-line-good}
    \mathrm{Pr}_P \Bigl[ \ \E_{\lambda \in \F^*} [ \eps_{P(\lambda)} ] \ge \frac{1-\eps}{2} \ \Bigr] \le  \Bigl( \frac{O(k)}{\eps q}\Bigr)^{k/2} \ . 
\end{equation}

 Using Theorem~\ref{thm:concat-list-decode}, we can now list decode $w|_P$ up to a radius $(1-\eps)/2$ to get a list of $L$ polynomials $h_1,h_2,\dots,h_L$, for $L \le \poly(1/\eps)$. By  \eqref{eq:local-line-good} and recalling that $k = 1/\eps$, the correct polynomial 
$h$ belongs to this list except with probability $(\eps q)^{-\Omega(1/\eps)}$. 
We will henceforth assume this is the case.

It remains to describe how to uniquely identify $h$ amongst the list. The algorithm will use the advice for this purpose. Specifically, the algorithm will check if there is a unique polynomial $h_i$ in the list that satisfies $h_i(j_{\ell}) = \LDC(x)[v_\ell]$ for $\ell = 1,2,\dots,k$. Here, for $v \in \F^m$, $\LDC(x)[v] \in \F$ denotes the field element whose encoding equals the advice bits at locations $(v;j)$, $j \in [N_{inner}]$. If this is the case, the algorithm will compute $h(v_0)$ and output the appropiate bit $(v_0;j)$ from the inner encoding that corresponds to $x_i$.

By definition, $h$ will pass this check. For $h_i \neq h$, since they are both degree $kd$ polynomials, they agree on at most $kd$ elements of $\F$. 
Since the points $v_1\ldots v_{k}$ are chosen uniformly at random, the polynomial $P$ is a random degree ${k}$ polynomial satisfying $P(0)=v_0$. We can therefore reframe the algorithm as follows. First, pick a random polynomial $P$ satisfying $P(0)=v_0$. Then, pick ${k}$ random points $j_1\ldots j_{k}$ to be random distinct points in $\F^*$, and set $v_1=P(j_1),\ldots, v_k=P(j_{k})$ (this results in the same algorithm as above where first picked $v_1\ldots v_{k}$ and then generated $P$). Then, since $j_1\ldots j_{k}$ are uniformly random, the probability that all those indices of $h$ match $h_i$ is at most $(\tfrac{kd}{|\F|})^k \le \eps^{3/\eps}$. 
Union bounding over all $L \le \poly(1/\eps)$ of the $h_i$'s, the probability that no $h_i\neq h$ passes the above tests at locations $j_1 \ldots j_{k}$ is at least $1-\eps^{\Omega(1/\eps)}$. 


    In all, the probability that both the correct $h$ belongs in the list $h_1\ldots h_L$, and no other $h_i$ is mistakenly chosen to be correct is at least $1-\eps^{\Omega(1/\eps)}>0.9$.
Then, we repeat this algorithm $t$ times. Each instance of the algorithm will identify the correct value of $x_i$ with probability at least $0.9$, so after $t := \sqrt{Q} > (\log n)^{\Omega(1)}$ iterations, the majority output for $h$ will be correct with high probability in $n$.
\end{proof}  
   

The proof of Theorem~\ref{thm:ldc-bin} is complete.
\end{proof}

We next turn to the proof of the large alphabet version of Theorem~\ref{thm:ldc-bin}. Here, we will only need the smooth local decoding guarantee.

\begin{theorem} (Large alphabet locally decodable code) \label{thm:ldc-large}
    Let $\eps > 0$ and let $\bbK$ be a field of the form $\bbF_{2^k}$ where $2^k>\eps^{-10}$, and let $Q = Q(n) \in [ (\eps^{-1}\log n)^{100}, n]$. There is a linear code $\LDC : \bbK^n \rightarrow \bbK^{N}$
    that satisfies the following properties:
    \begin{itemize}
    \item {\bf Length:} The length $N =  N(n)$ satsifies $N \le n\cdot (\eps^{-1}\log_Q(n))^{100\log_Q(n)}$.
    \item {\bf Distance:} 
        For any $x \not= y \in \bbK^n$, it holds that $\delta(\LDC(x), \LDC(y)) \ge 1- \eps^6$.

    \item {\bf Large alphabet smooth local decoding/correcting:} 
        There exists a randomized algorithm $\cA$ that on input $i\in [n]$ (resp. input $i\in [N]$) queries $Q$ bits (non-adaptively) of the encoding and runs in $O(Q^3)$ time and space and does the following. For any word $w \in (\bbK \cup \bot)^N$, it outputs a list of probabilities $p(\sigma^*)$ for $\sigma^*\in (\bbK \cup \bot)$, satisfying that $\sum_{\sigma^*\in (\bbK\cup\bot} p(\sigma^*)=1$, at most one $\sigma\in\bbK$ has $p(\sigma)>0$, and $p(x_i)+0.5p(\bot)>1-\delta(w, \LDC(x))-\eps$ (resp. $p(\LDC(x)_{i})+0.5p(\bot)>1-\delta(w, \LDC(x))-\eps$) with high probability in $n$. Here, the Hamming distance between $\bot$ and $\sigma\in \bbK$ is $0.5$.
        Moreover, the decoding algorithm queries any specific index with probability at most $\frac{1.1Q}{N}$.
        
    \end{itemize}
\end{theorem}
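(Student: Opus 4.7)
The approach mirrors that of Theorem~\ref{thm:ldc-bin}, with the encoding given by a Reed-Muller outer code concatenated with a suitable inner code over $\bbK$. I would take the outer code to be $\text{RM}_\bbF(d, m)$ where $\bbF$ is an extension field of $\bbK$ of size $q = \max(|\bbK|, \lceil\sqrt{Q}\rceil)$, with degree $d = \eps^6 q/4$ and $m$ the smallest integer satisfying $\binom{d+m}{m} \ge n$. The inner code $\C_{\text{inner}} : \bbF \to \bbK^{N_{\text{inner}}}$ comes from Theorem~\ref{thm:binary-ecc} applied with alphabet $\bbK$, viewing $\bbF \cong \bbK^{[\bbF:\bbK]}$; the hypothesis $|\bbK| > \eps^{-10}$ guarantees that $\C_{\text{inner}}$ has relative $\bbK$-distance $\ge 1 - 1/|\bbK| - \eps^6/8 > 1 - \eps^6/4$. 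The concatenation $\LDC$ is linear. The length bound $N \le n \cdot (\eps^{-1}\log_Q n)^{100\log_Q n}$ is verified by essentially the same parameter-counting argument used in Theorem~\ref{thm:ldc-bin}. By Schwartz-Zippel, the outer code has relative distance $\ge 1 - d/q = 1 - \eps^6/4$, and combining with the inner distance gives overall relative distance $\ge (1 - \eps^6/4)^2 > 1 - \eps^6$.

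The smooth local decoding algorithm performs $t = \Theta(Q^{1/3})$ iterations; each samples a random degree-$2$ curve $P(\lambda) = v_0 + v_1\lambda + v_2\lambda^2$ through the target position $v_0 \in \bbF^m$ (with $v_1, v_2$ uniform in $\bbF^m$), queries all $(q-1)N_{\text{inner}} \le Q/t$ symbols of $w$ on the curve, and runs the standard unique decoding algorithm for the concatenated curve code (outer Reed-Solomon of degree $2d$, same inner $\C_{\text{inner}}$) up to relative radius $r = 1/2 - \eps/2$. Since this curve code has relative distance $> 1 - \eps^6/2$, unique decoding succeeds and returns the true polynomial $h^*$ whenever $\delta_P := \delta(w|_P, \LDC(x)|_P) < r$; conversely, if it returns any $h \neq h^*$, a triangle-inequality argument using the overall distance of the curve code forces $\delta_P \ge 1/2 + \eps/4$. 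If decoding returns a polynomial $h$, the iteration records $\sigma_P = \C_{\text{inner}}(h(0))_a$ and weight $\tau_P = 1 - 2\delta(h^{\text{enc}}|_P, w|_P)$; otherwise $\sigma_P = \bot$ and $\tau_P = 0$. We aggregate by computing $g(\sigma) = (1/t)\sum_P \tau_P \mathbf{1}[\sigma_P = \sigma]$, taking $\sigma^* = \arg\max_{\sigma \in \bbK} g(\sigma)$, and outputting $p(\sigma^*) = g(\sigma^*)$, $p(\bot) = 1 - p(\sigma^*)$, and $p(\sigma) = 0$ for $\sigma \notin \{\sigma^*, \bot\}$.

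For correctness, note that the points $\{P(\lambda)\}_{\lambda \in \bbF^*}$ are pairwise independent, so by Chebyshev $|\delta_P - \delta(w, \LDC(x))| < \eps/8$ on each iteration except with probability $O(1/(q\eps^2))$. Thus when $\delta := \delta(w, \LDC(x)) < 1/2 - \eps$, almost all iterations have $\delta_P < r$, so decoding succeeds with $\sigma_P = x_i$ and $\tau_P = 1 - 2\delta_P$; Chernoff-style concentration then yields $g(x_i) \ge 1 - 2\delta - \eps$ with high probability in $n$. Wrong decodings (forcing $\delta_P \ge 1/2 + \eps/4$) are negligibly rare by the same concentration, so no alternative $\sigma \neq x_i$ accumulates comparable weight, meaning $\sigma^* = x_i$ and $p(x_i) + 0.5\,p(\bot) = 0.5 + 0.5\,g(x_i) > 1 - \delta - \eps$ as required. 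The main obstacle is the high-noise regime $\delta \in [1/2 - \eps, 1 - \eps)$: here correct decoding is unreliable and wrong decodings may be common, but one can exploit the near-orthogonality of codewords (relative distance $\ge 1 - \eps^6$) to establish a Plotkin-type bound $\sum_{\sigma \in \bbK} g(\sigma) \le 1 + O(\eps^6)$ per iteration, which prevents any wrong $\sigma$ from acquiring enough mass to falsify the guarantee; combined with the fact that $1 - \delta - \eps$ itself is small in this regime, this closes the argument. Finally, the smoothness-of-queries property follows because $P(\lambda)$ is uniform on $\bbF^m$ for each $\lambda \neq 0$, so each fixed index is queried with probability at most $t(q-1)/q^m \le Q/N$.
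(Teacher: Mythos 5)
Your construction and decoder are essentially the paper's: the same Reed--Muller outer code concatenated with a Theorem~\ref{thm:binary-ecc} inner code over $\bbK$, the same random degree-$2$ curves repeated $t=Q^{1/3}$ times, the same unique decoding on each curve up to radius $\frac12-\frac\eps2$, and the same per-curve weight $1-2\delta(w|_P,h^{\mathrm{enc}})$. The one real difference is the final aggregation. The paper assigns, on each curve $P$, confidences $p_P(b_P)=1-2\delta_P$ and $p_P(\bot)=2\delta_P$ (and $p_P(\bot)=1$ when decoding fails), proves the \emph{deterministic per-curve} inequality $p_P(x_i)+0.5\,p_P(\bot)\ge 1-\delta(w|_P,\LDC(x)|_P)-\frac\eps2$ --- valid whether the curve decodes correctly, incorrectly, or not at all --- then averages over $P$ and applies a single Hoeffding bound to $\frac1t\sum_P\delta(w|_P,\LDC(x)|_P)$; any mass left on a second field element is merged into $\bot$ in a way that preserves $p(x_i)+0.5\,p(\bot)$. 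This avoids any case split on the global error rate. Your rule (argmax, then dump all remaining mass into $\bot$) discards the per-curve structure and forces exactly the case analysis you flag as the main obstacle.

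That high-noise case is also where your sketch does not close as written. The bound you invoke, $\sum_{\sigma\in\bbK}g(\sigma)\le 1+O(\eps^6)$, is trivially true here --- unique decoding assigns each curve to at most one $\sigma$, so $\sum_\sigma g(\sigma)\le\frac1t\sum_P\tau_P\le 1$ --- and it is not strong enough: when $\sigma^*\ne x_i$ you need $g(\sigma^*)<2\delta+2\eps-1$, which does not follow from $g(\sigma^*)\le 1$. The ingredient that does close it is the quantitative form of your own triangle-inequality observation: a wrongly decoded curve satisfies $\delta(w|_P,h^{\mathrm{enc}})\ge 1-\eps^6-\delta_P$, hence contributes weight $\tau_P\le 2\delta_P-1+2\eps^6$, so $g(\sigma^*)\le\frac1t\sum_P\max(0,2\delta_P-1)+2\eps^6\le\max(0,2\delta-1)+O(\eps)$ by your Chebyshev estimate; this suffices in every sub-case, provided that near $\delta=\frac12-\eps$ you also use the lower bound $g(x_i)\ge 1-2\delta-o(\eps)$ to rule out a wrong argmax, and leave yourself some slack in the constants so the strict inequality survives at the boundary. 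One further omission: the received word lives in $(\bbK\cup\bot)^N$, so the curve decoder must handle erasure symbols (the paper sets them to $0$, list-decodes via Theorem~\ref{thm:concat-list-decode}, and checks the $O(1)$ candidates against the erasure-aware distance).
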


\begin{proof}
    The construction is very similar to that of Theorem~\ref{thm:ldc-bin} with some adjustments for the large alphabet.
    
    The locally decodable code $\LDC$ that we will be using is the concatenation of two codes. The outer code $\C_{outer}$ is a Reed-Muller code with the following parameters. Recall that the message has $n$ symbols from $\bbK$. The number of queries permitted for the local decoding is $Q$. We let $q<\sqrt{Q}$ be the field size (moreover, let $q$ be a multiple of $2^k$), let $d=\eps^6q/4$ be the degree of the polynomials, and let $m$ be the smallest integer such that $n \le \binom{d+m}{m}$. We assume all these variables are integers, and that $q$ is a power of $2$. The inner code $\C_{inner}$ is an alphabet $\bbK$ linear code of relative distance $1- \frac{\eps^6}{4}$ and rate $\poly(\eps^{-1})$, as guaranteed by Theorem~\ref{thm:binary-ecc}. The concatenation is done via a linear map to make $\LDC$ linear.

    We will split the proof into three claims, detailing the rate of the code, distance of the code, and the smooth local decoding property.

\begin{claim}
    The code $\LDC$ has length at most $n\cdot (\eps^{-1}\log_Q(n))^{100\log_Q n}$.
\end{claim}

\begin{proof}
    The proof is essentially identical to the previous construction, but we reproduce it since in this construction $\eps$ is no longer treated as a constant. We will show that $m<10\log_{Q} n$. Since $\binom{d+m}{m}$ is increasing with $m$, it suffices to show that $\binom{d+10\log_{Q} n}{10\log_{Q} n}> n$. Noting that $Q^{0.1}>100\eps^{-7}\log n$,
    \begin{align*}
        &\binom{d+10\log_{Q} n}{10\log_{Q} n} > \left(\frac{d}{10\log_{Q} n} \right)^{10\log_{Q} n} > \left(\frac{\eps^6\sqrt{Q}}{100\log_{Q} n} \right)^{10\log_{Q} n} \\
        > &\left(\frac{Q^{0.4}\log n}{\log_{Q} n} \right)^{10\log_{Q} n} > (Q)^{4\log_{Q} n} > n^{4} \ .
    \end{align*}
    
    We use this to calculate an upper bound on the length of the outer code. The expression $\frac{N(n)}{n}$ is $\frac{q^m}{\binom{d+m}{m}}$, which is upper bounded as follows:
    \begin{align*}
        &\frac{q^m}{\binom{d+m}{m}} \leq \frac{q^m\cdot m^m}{d^m} = \left(\frac{qm}{d}\right)^m = (4m\eps^{-6})^m < (\eps^{-1}m)^{6m} \\
        <~ &(\eps^{-1}10\log_{Q} n)^{60\log_{Q} n} < (\eps^{-1}\log_{Q} n)^{100\log_{Q} n}\ .
    \end{align*}

    The inner code $\C_{inner}$ has rate lower bounded by a polynomial in $\eps$, and thus the overall code length is upper bounded by $n \cdot (\eps^{-1}\log_Q n)^{100\log_Q n}$.

\end{proof}

\begin{claim}
\label{clm:distance2}
    The code $\LDC$ has relative distance at least $1-\eps^6$.
\end{claim}

\begin{proof}
    By the Schwartz-Zippel lemma~\cite{Schwartz80,Zippel79}, the relative distance of the code is at least $(q - d + 1)/q \cdot \left( \frac12 - \eps^6/4 \right) \ge 1 - \eps^6 $.
\end{proof}

\begin{claim}
\label{clm:smooth-unique-decoding2}
    The code $\LDC$ satisfies the smooth local decoding property.
\end{claim}

\begin{proof}
    Denote the length of the outer code as $N_{outer}$ and the length of the inner code as $N_{inner}$. We will index the symbols of the concatenated code by pairs $(v, j)$ where $v \in \bbF^m$ and $j \in [N_{inner}]$. We remark that the outer code $\C_{outer}$ is systematic, so smooth local correcting is sufficient.

    The decoding algorithm is as follows. 
    Suppose we want the $i$'th index of $x$, which is located at $v_0 \in \bbF^m$ in the outer code. It is the $a$'th index of this symbol. Then, we do the following $t = \sqrt[3]{Q}$ times: Pick a random degree-$2$ polynomial $P(\lambda) = v_0 + v_1 \lambda + v_2 \lambda^2$ by sampling $v_1, v_2 \gets \bbF^m$. Then, we query all symbols of $w$ located at $(P(\lambda), j)$ for $\lambda \in \bbF^*$, $j \in [N_{inner}]$ (we denote these collection of values by $w|_P$). Then the number of queries is $t \cdot (q-1) \cdot N_{inner} = t\cdot (q-1)\cdot \log(q) < Q$. Next, we will in $O(q^2) < O(Q^2)$ time and space find the unique degree $2d$ polynomial $h_P \in \bbF[\lambda]$, if it exists, such that the relative distance between $w|_P$ and the $\C_{inner}$ encodings of $h_P(\lambda)$ is at most $\frac12 -\frac\eps2 \leq \frac12 \cdot \frac{(q - 1 - 2d)}{q-1} \cdot \left( 1 - \frac\eps4 \right)$ (note that it's unique even with the erasure symbols in $w$). This can be done, for example, by setting every instance of $\bot$ to $0$ and performing list decoding within radius $1-\frac\eps2$, using Theorem~\ref{thm:concat-list-decode} to obtain a list of size $O(1)$ candidates, and then checking each manually.

    For each of the $t$ polynomials $P$, let $b_P:=\C_{inner}(h_P(0))_a$, that is, the $a$'th symbol of the $\C_{inner}$ encoding of $h_P(0)$, representing a candidate for the $i$'th index of the codeword. 
    Next, set $p_P(b)=1-2\delta(w|_{P},\C_{inner} \text{ encoding of } h_P)$,  
    $p_P(\bot)=1-p^b_P=2\delta(w|_{P},\C_{inner}$, and for all other $\sigma\in \bbK$, set $p_P(\sigma)=0$. If $b_P$ does not exist, just set $p_P(\bot)=1$ and $p_P(\sigma)=0$ for all $\sigma\in\bbK$. Then, for all $\sigma^*\in (\bbK \cup \bot)$, the decoding algorithm outputs $p(\sigma^*)=\frac1t\sum_P p_P(\sigma^*)$.
    These values satisfy the requirement $\sum_{\sigma^*}p(\sigma^*)=1$. If multiple values of $p(\sigma)$ for $\sigma\in \bbK$ are nonzero, simply decrease both by the minimum of the two and increase $p(\bot)$ by twice as much. The smooth local decoding inequality will still be true if we can show it for the original values.

    Before we prove the smooth local decoding property, we will show that $p_P(x_i)+0.5p_P(\bot) \ge 1-\delta(w|_P, \LDC(x)|_P))-\frac\eps2$. For $h_P$, it holds that $p_P(b_P) + 0.5p_P \ge 1-\delta(w|_{P},\C_{inner} \text{ encoding of } h_P) - \frac\eps2 $. Otherwise, for all other potential codewords $z$ restricted to $P$, if $b_P$ exists, it is case that 
    \[
        \delta(w|_P,z)\geq 1-\frac\eps2-\delta(w|_{P},\C_{inner} \text{ encoding of } h_P) \geq 1- \frac\eps2 - 0.5p_P(\bot)
    \]
    which implies that $0.5p_P(\bot) \ge 1-\delta(w|_P,z)-\frac\eps2$. On the other hand, if $b_P$ doesn't exist, then 
    \[
        p_P(x_i)+0.5p_P(\bot) \geq 0+0.5\cdot 1 = 0.5 \geq 1-\delta(w|_P, \LDC(x)|_P))-\frac\eps2.
    \]
    Thus, it holds that that $p_P(x_i)+0.5p_P(\bot) \ge 1-\delta(\LDC(x)|_P), w|_P)-\frac\eps2$.

    We now discuss why the smooth local decoding property holds. We have that
    \begin{align*}
        \Pr& \left[ p(x_i)+0.5p(\bot) \le 1 - \delta(w, \LDC(x)) -\eps  \right] \\
        &\le \Pr \left[ \frac1t\sum_P \left(1-\delta(\LDC(x)|_P), w|_P)\right) \le 1 - \delta(w, \LDC(x)) -\frac\eps2\right] \\
        &= \Pr \left[ \frac1t\sum_P \delta(\LDC(x)|_P), w|_P) \ge \delta(w, \LDC(x)) +\frac\eps2\right] .
    \end{align*}
    
    By Hoeffding's inequality,
        \[
            \Pr \left[ \frac1t\sum_P \delta(\LDC(x)|_P), w|_P) \ge \delta(w, \LDC(x)) +\frac\eps2\right]
            < e^{-\eps^2 t/2} \ .
        \]

    As long as $Q$ is sufficiently large relative to $\eps^{-1}$, it holds that $\eps^2t/2=\eps^2\sqrt[3]{Q}>(\log n)^2$, so the smooth local decoding succeeds with high probability. The overall space taken to compute all the iterations is at most $O(Q^3)$.
\end{proof}

The proof of Theorem~\ref{thm:ldc-large} is now complete.
\end{proof}
\section{Stream decodable codes of near quadratic length} \label{sec:n2-const}

In this section we prove \cref{thm:n2const}, which is a construction of a stream-decodable code that achieves nearly quadratic length in polylogarithmic space. We restate it here:

\quadconst*

For convenience, we will scale $\e$ by a factor of 10, so that the adversary introduces at most $(1/4 - 10\e)m$ errors into the stream (rather than $(1/4-\e)m$). Also, we will present an algorithm whose space is $O(s(n))$ rather than just $s(n)$, since we can simply scale $s(n)$ to account for this. We will assume throughout this section that $n$ is sufficiently large.

The encoding algorithm $\enc$ will be very simple. First, we let $Q = \min (s(n)^{0.1}, 2^{\sqrt{\log n}})$. Then, we take a code $\LDC$ as in \cref{thm:ldc-bin} with parameters $n, \eps, Q$, and let $y = \LDC(x)$ have length $N=n \cdot (\log_Q n)^{O(\log_Q n)}$. Then, we simply define $\enc$ as follows:

\begin{defn}
Define $\enc(x) = \LDC(x)^k = y^k$, where $y^k$ denotes the string $y$ repeated $k$ times, and $k = Q^2n/s$.
\end{defn}

Note that we will then have $m \coloneq |\enc(x)| = kN$, so the adversary will be permitted at most $(1/4 - 10\e)kN$ errors.)

\begin{algorithm}[t] 
\caption{Decoding algorithm $\dec$ for \cref{thm:n2const}}
\label{alg:init}
\begin{algorithmic}[1]
\State Let $Q = \min(s^{0.1}, 2^{\sqrt{\log n}})$, $k = k = Q^2n/s$.
\State $\e \gets \e/10$.
\State Sample $j_1, \dots, j_a \sim \Dd$, where $\Dd$ is as in \cref{thm:ldc-bin}.
\State Sample $j_{u+1}, \dots, j_{u+v} \in [N]$ uniformly randomly.
\State For each $1 \le t \le u+v$, set $P_0^t, P_1^t \ot 0$.
\While{there exists $t$ such that $P_0^t, P_1^t \le (1-\e)k/2$}
    \State Read next copy $y^{(i)}$, and perform smooth local decoding on $y^{(i)}$ as in \cref{thm:ldc-bin}, $u+v$ times in parallel, to get confidences $p_0^t, p_1^t$ for each index $j_t$.
    \State For each $t$, set $P_0^t \gets P_0^t + p_0^t$ and $P_1^t \gets P_1^t + p_1^t$.
\EndWhile
\State For each $t$, let $b_t$ be such that $P_{b_t}^t > (1-\e)k/2$.
\While{there are bits remaining to output}
    \State Read the next copy $y^{(i)}$, keeping a counter $c$ tracking how many $u < t \le u+v$ satisfy $y^{(i)}_t = b_t$. 
    \Comment{We will prove that the copies will not run out before decoding is complete.}
    \State In parallel with the above, perform local list decoding with advice as in \cref{thm:ldc-bin} using $y_{j_t} = b_t$ for $1 \le t \le u$, in parallel for the next $r = s(n)/Q^2$ bits that have not yet been output.
    \If{$c < (1/2 - \e)v$}
        \State Output the decoded bits.
    \EndIf
\EndWhile
\end{algorithmic}
\label{alg:dec-n2const}
\end{algorithm} 

We now describe how the decoding algorithm $\dec$ will work, as given formally in \cref{alg:dec-n2const}. Let $u$ be chosen as in the ``local decoding with advice'' property in \cref{thm:ldc-bin}, and let $v=(\log n)^2$. Pick indices $j_1, \dots, j_{u+v} \in [1, N]$, where $j_1, \dots, j_u$ are chosen according to the distribution $\Dd$ in \cref{thm:ldc-bin} and $j_{u+1}, \dots, j_{u+v}$ are chosen uniformly randomly. 

Informally, these bits will be used to create a ``checksum'' for $y = \LDC(x)$. More explicitly, we will use the first part of the stream to determine what $y_{j_1}, \dots, y_{j_{u+v}}$ are with high probability. Then, for the remaining (corrupted) copies of $y$, we will first check whether they have at most $1/4-\e$ errors by comparing their bits at $j_{u+1}, \dots, j_{u+v}$. Then, if they do, we will use the local decoding with advice property in \cref{thm:ldc-bin} to recover bits of $x$.

More specifically, for each index $j_t$, we will keep track of quantities $P_0^t, P_1^t$, which are the total confidence that $y_{j_t}$ is 0 or 1, respecitively. When receiving the stream, we will perform smooth local decoding (as described in \cref{thm:ldc-bin}) on each (corrupted) copy of $y$, to obtain confidences $p_0, p_1$ for each index $j_t$ (in parallel for each $t$). For each $t$, we then increment $P_0^t, P_1^t$, respectively, by the obtained $p_0, p_1$. We show the following claim:

\begin{claim} \label{claim:err-count}
With high probability, the following holds for all $t$ and at every step of the algorithm: Let $b = y_{j_t}$. Suppose that, after reading $\ell$ corrupted copies of $y$, we have $P_b^t \le (1-\e)k/2$. Then, the number of errors in the first $\ell$ copies of $y$ is at least $\f 12 (1-\eps)(\ell - \f 12 k)N$.
\end{claim}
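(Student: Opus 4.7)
The plan is to use the smooth local decoding guarantee of \cref{thm:ldc-bin} on each copy $y^{(i)}$ of $y$. Specifically, for each copy $i$, let $\delta_i = \delta(y^{(i)}, y)$ denote the relative Hamming distance between the received copy and the true codeword $y = \LDC(x)$, and let $p_{b,i}^t$ denote the confidence output by smooth local decoding for index $j_t$ being the value $b$. By the smooth decoding guarantee, with high probability in $n$, we have $p_{y_{j_t},i}^t \ge 1 - 2\delta_i - \eps$. Since there are only $k(u+v) = \poly(n)$ calls to smooth local decoding over the course of the algorithm, a union bound ensures that this inequality holds simultaneously for \emph{all} calls (over all $t$ and all copies $i$), with high probability in $n$. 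I will condition on this event for the remainder of the proof.

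Given this event, fix any $t$ and any step of the algorithm, and suppose $\ell$ copies of $y$ have been processed so far. Let $b = y_{j_t}$. Then we have
\[
    P_b^t \;=\; \sum_{i=1}^\ell p_{b,i}^t \;\ge\; \sum_{i=1}^\ell \bigl(1 - 2\delta_i - \eps\bigr) \;=\; (1-\eps)\ell - 2\sum_{i=1}^\ell \delta_i.
\]
Assuming the hypothesis $P_b^t \le (1-\eps)k/2$, rearranging gives
\[
    2\sum_{i=1}^\ell \delta_i \;\ge\; (1-\eps)\ell - (1-\eps)k/2 \;=\; (1-\eps)\bigl(\ell - \tfrac{1}{2}k\bigr),
\]
so $\sum_{i=1}^\ell \delta_i \ge \tfrac{1}{2}(1-\eps)(\ell - \tfrac{1}{2}k)$. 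Multiplying by $N$ (the length of one copy of $y$) yields that the total number of errors across the first $\ell$ copies is at least $\tfrac{1}{2}(1-\eps)(\ell - \tfrac{1}{2}k)N$, which is exactly the claimed bound.

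The only subtle point is ensuring that the high-probability event from the union bound holds uniformly across all decoding calls; this is routine since each call fails with probability $1/n^{\omega(1)}$ and there are only polynomially many calls. The calculation itself is just linearity of the confidence sum combined with the smooth decoding guarantee, so I do not anticipate any real obstacle beyond bookkeeping.
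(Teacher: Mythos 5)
Your proof is correct and follows essentially the same route as the paper's: apply the smooth local decoding guarantee $p_{b,i}^t \ge 1 - 2\delta_i - \eps$ to each copy, sum to get $P_b^t \ge (1-\eps)\ell - 2\sum_i \delta_i$, and rearrange under the hypothesis $P_b^t \le (1-\eps)k/2$, with a union bound over the polynomially many decoding calls. (In fact your write-up fixes a sign typo in the paper's final inequality, which should read that the error count is at least, not at most, the stated bound.)
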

\begin{proof}
We prove the statement for a particular choice of $t, \ell$, since a union bound will imply the statement simultaneously for all choices.

Suppose that the total number of errors in the first $\ell$ copies of $y$ is $xN$. Then, by \cref{thm:ldc-bin}, with high probability we have for each $i$ that $p_b^{(i)} \ge 1 - 2\delta(y^{(i)}, y) - \eps$. Therefore, we have
\begin{align*}
P_b^t &= \s_{i=1}^\ell p^{(i)} \\
&\ge (1-\e)\ell - 2\s_{i-1}^\ell \delta(y^{(i)}, y) \\
&= (1-\e)\ell - 2x.
\end{align*}
On the other hand, by assumption, $P_b^t \le (1-\e)k/2$. Therefore, we have $(1-\e)\ell - 2x \le (1-\e)k/2$. Rearranging, we obtain $x \le \f 12 (1-\eps)(\ell - \f 12 k)$, as desired.
\end{proof}

Note that this implies that if for $P_b^t \le (1-\e)k/2$ for $b = y_{j_t}$, $\ell = (1-\e)k$, and any $t$, then the number of errors is at least $\f 12 (1-\eps)((1/2 - \e) k)N > (1/4-\e)kN$, a contradiction. Thus, when $\ell = (1-\e)k$, we have $P_b^t > (1-\e)k/2$, and since $P_0^t + P_1^t = \ell$, we also have $P_{1-b}^t < (1-\e)k/2$. Thus, $b=y_{j_t}$ is the unique $b \in \{0, 1\}$ such that $P_b^t$ exceeds $(1-\e)k/2$ before $P_{1-b}^t$.

Thus, we let $\ell$ be the smallest value such that after reading $\ell$ corrupted copies of $y$, we have for every $t$ that there exists some $b_t$ such that $P_b^t > (1-\e)k/2$. By the previous paragraph, we have (with high probability) $b_t = y_{j_t}$ for every $t$ and $\ell \le (1-\e)k$. We store all these values $y_{j_t}$. Moreover, by the minimality of $\ell$, by \cref{claim:err-count}, the number of errors in the first $\ell-1$ copies of $y$ is at least $\f 12 (1-\eps)(\ell - 1 - \f 12 k)N$. Thus, the total number of errors remaining after the $\ell$-th copy of $y$ is at most
\begin{align*}
\p{\f 14 - 10\e}kN - \f 12 (1-\eps)\p{\ell - 1 - \f 12 k}N  
&\le   \p{\f 12 - 3\e} (k - \ell)N - 5\e kN.
\end{align*}

Now, for all copies $y^{(i)}$ for $i > \ell$, we do the following two things in parallel. First, we count how many $u < t \le u+v$ satisfy $y^{(i)}_{j_t} \ne y_{j_t}$; let $c$ be this count of incorrect bits. Second, for the next $r = s(n)/Q^2$ bits which we haven't yet output, we run the ``local list decoding with advice'' algorithm in \cref{thm:ldc-bin} (in parallel for all $r$ bits). Then, if $c < (1/2-2\e)v$, we output the $r$ resulting decoded bits; otherwise, we do nothing and just continue to the next copy $y^{(i-1)}$.

Since $j_{u+1}, \dots, j_{u+v}$ are chosen uniformly randomly and independently from $y^{(i)}$, we can apply a Chernoff bound on $c$. Since $v = (\log n)^2$, we have with high probability that if we did output the bits, then the number of errors was at most $(1/2-\e)N$, so by \cref{thm:ldc-bin}, all bits that were output are correct.

Moreover, we also have by the same Chernoff bound that (with high probability) if we did not output any bits in step $i$, then the number of errors in $y^{(i)}$ must have been at least $(1/2-3\e)N$. Recall that the total number of errors remaining in the last $k-\ell$ copies of $y$ was at most $(1/2-3\e)(k-\ell)N - 5\e kN$. Thus, the number of copies with less than $(1/2-3\e)N$ errors is at least
\[\f{5\e kN}{(1/2-3\e)N} > 10k \ge \f nr.\]
Thus, in the end, we will have output at least $(n/r) \cdot r \ge n$ bits with high probability, so we are done.

Now we analyze the space complexity of the algorithm. During the first part, we repeatedly did smooth local decoding $u+v = O(Q + (\log n)^2)$ times in parallel, which takes $O(Q^2)$ space for each bit by \cref{thm:ldc-bin}. Thus, the total space complexity of this first step is $O((Q + (\log n)^2)Q^2)$, which is less than $s(n)$ since $s(n) > (\log n)^C$ and $Q \le s(n)^0.1$. 
In the second half of the algorithm we performed local list decoding with advice $r = s(n)/Q^2$ times in parallel, and each instance takes $O(Q^2)$ space per \cref{thm:ldc-bin}, so the total space for this part of the algorithm is $O(s(n))$. Thus, the decoding algorithm takes $O(s(n))$ space in total, as desired.

Finally, we check that the length of the encoding $m = kN$ is the desired quantity. Recall that $N = n \cdot (\log_Q n)^{O(\log_Q n)}$, where $Q = \min(s(n)^{0.1}, 2^{\sqrt{\log n}})$. Now we check the two regimes of \cref{thm:n2const}:
\begin{itemize}
    \item If $s(n) = \Omega((\log n)^t)$, then $Q \ge (\log n)^{0.1t}$, so $N \le n \cdot (\log n)^{c \log n / t \log \log n} = n^{1 + c/t}$, for some constant $c$. Also, $k = Q^2 n / s(n) \le n$. Thus, $m = kN = n^{2 + O(1/t)}$.
    \item If $s(n) = (\log n)^{\omega(1)}$, then we also have $Q = (\log n)^{\omega(1)}$, so $N \le n \cdot (\log n)^{\log n / \omega(\log \log n)} = n^{1+o(1)}$. Also, since $Q \le 2^{\sqrt{\log n}} = n^{o(1)}$, we have $k = Q^2 n / s(n) = n^{1+o(1)}/s(n)$. Thus, $m = kN = n^{2 + o(1)}/s(n)$.
\end{itemize}
This completes the proof of \cref{thm:n2const}.

\section{Stream decodable codes require quadratic length} \label{sec:lower}

\renewcommand{\eps}{\rho}

We now prove our lower bound, \cref{thm:n2lower} (restated below), demonstrating that the construction in Section~\ref{sec:n2-const} is essentially tight. That is, any error-correcting code that can be decoded with failure probability at most $1/2n^2$ by a stream permitting $s(n)$ space must have encoding length at least $\Omega\left( \frac{n^2}{s(n)} \right)$. 

\quadlower*

\begin{proof} 
Suppose otherwise; that is, suppose that there is a $(\e, m, s)$-coding scheme for streams, where $m = \e n^2 / 10^4 s$, $\e$ is a fixed constant, and $s = s(n) \ge \log n$ (and $n$ is sufficiently large). Also, we may assume that $s < n / 100$ (otherwise the statement is obvious).

We will then demonstrate how to construct an adversarial input for this coding scheme, so that $\dec$ fails with probability at least $1/2n^2$. First, note that we can assume that $\dec$ does not output anything when it receives a 0 bit (except at the end of the stream): instead, we may have $\dec$ keep track of the length of the current run of 0s (using only $O(\log n) \lesssim s$ memory), and process all the 0s when it encounters the next 1. In particular, we will have the adversary replace several parts of the input with all 0s, and thus we can assume that the algorithm does not output anything at these parts (except perhaps the last block).

For an input string $x \in \{0, 1\}^n$, the encoding $\enc(x)$ then has length $m$. We split $\enc(x)$ up into $k = n / 100s$ contiguous ``blocks'' which each consist of $\ell = \e n / 100$ bits; denote these blocks $B_1(x), \dots, B_k(x)$ (we will sometimes abbreviate $B_i(x)$ by just $B_i$). We will consider what the algorithm $\dec$ may output during each block, assuming that the block $B_i$ is uncorrupted. Essentially, we will show that there is a fixed set of roughly $\ell$ indices such that $\dec$ essentially only outputs indices from this set while it is processing block $i$.

To this end, we will let $a_i \in \{0, 1\}^s$ denote the contents of the memory of $\dec$ right before receiving block $i$. Note that $a_i$ is random and may depend on the randomness of $\dec$, as well as on the bits that the adversary has changed in previous blocks. We will mostly restrict our attention to values of $a_i$ that do not cause the algorithm to fail with significant probability. Specifically, we say that $a_i$ is \textit{good with respect to $x$}, or just \textit{good} (for particular values of $x$ and $i$), if the probability that $\dec$ outputs an incorrect bit during block $i$ with starting memory $a_i$ is at most $1/n^2$. (This probability is taken over only the randomness of the algorithm $\dec$, since the contents of block $B_i$ are a deterministic function of $x$.)

Now, suppose that the decoder $\dec$ currently has memory state $a_i$ and is about to receive block $i$ (whose contents are $B_i$). While it processes $B_i$, it will output various bits of $x$; that is, there are various pairs $(j, b)$ for which $\dec$ will output that $x_j=b$. (We assume, as we may, that $\dec$ keeps track of which index it is on, so we can determine $j$ from the memory contents of $\dec$.) When it does so, we say that $\dec$ outputs the pair $(j, b)$. Then, let $T(a_i, B_i)$ be the set of all $(j, b)$ which $\dec$ outputs with probability at least $1/n^2$ when it receives $B_i$ with initial memory contents $a_i$. (Again, this probability is only over the randomness of $\dec$.) Note that if $a_i$ is good (with respect to $x$), then $T(a_i, B_i)$ must match $x$ (that is, $x_j = b$ for all $(j, b) \in T(a_i, B_i)$). Note that $T(a_i, B_i)$ is a \textit{deterministic} function of $a_i, B_i$.


We are now ready to prove the following lemma.

\begin{lemma} \label{lemma:lb-1}
There exists $x \in \{0, 1\}^n$ such that the following holds for all $i$: There is a set $S_i$ of size at most $3\ell$ 
such that for all good $a_i$, we have $|T(a_i, B_i(x)) \setminus S_i| < 3s$.
\end{lemma}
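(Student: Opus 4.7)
The plan is to show that a uniformly random $x \in \{0,1\}^n$ satisfies the stated property with positive (indeed overwhelming) probability. Set $r = \ell / s$; note that since the proof of the theorem assumes $m = \rho n^2 / 10^4 s$ and any coding scheme trivially requires $m \geq n$, we have $s \leq \rho n / 10^4$, so $s \leq \ell/100$ and in particular $s \ll \ell$. The main preliminary step is to establish that, with high probability over $x$, for every block index $i \in [k]$, every $r' \leq r+1$, and every tuple $(a^{(1)}, \ldots, a^{(r')})$ of memory values that are all good with respect to $x$, the union $U := \bigcup_{j=1}^{r'} T(a^{(j)}, B_i(x))$ has size at most $3\ell$. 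To prove this, fix a tuple $(i, r', \vec{a}, b)$ with $b \in \{0,1\}^\ell$ a potential block content; then $U(\vec{a}, b) := \bigcup_j T(a^{(j)}, b)$ is a deterministic function of $(\vec{a}, b)$. Whenever every $a^{(j)}$ is good with respect to $x$, each set $T(a^{(j)}, B_i(x))$ must be consistent with $x$ (any inconsistent pair in it would correspond to an incorrect output occurring with probability at least $1/n^2$), so $U$ itself is consistent with $x$. Thus the bad event ``$B_i(x) = b$, all $a^{(j)}$ good w.r.t.\ $x$, and $|U(\vec{a}, b)| > 3\ell$'' forces $x$ to agree with $U$ on more than $3\ell$ prescribed coordinates, which happens with probability at most $2^{-3\ell}$ over uniform $x$. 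Summing over at most $k \cdot 2^{(r+1)s + 1} \cdot 2^\ell$ choices of $(i, r', \vec{a}, b)$ and using $(r+1)s = \ell + s$, the total bad probability is $O(k \cdot 2^{s - \ell}) = o(1)$.

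Given such an $x$, the set $S_i$ is built greedily: initialize $S_i = \emptyset$, and while there exists a good $a$ with $|T(a, B_i(x)) \setminus S_i| \geq 3s$, adjoin those new elements to $S_i$. If greedy were to run $r+1$ steps on good $a^{(1)}, \ldots, a^{(r+1)}$, the resulting $S_i$ would have size at least $3s(r+1) = 3\ell + 3s$, contradicting the preliminary step applied to this $(r+1)$-tuple. So greedy terminates in at most $r$ steps; then $|S_i| \leq 3\ell$ (by the preliminary step applied to the selected $a$'s), and $|T(a, B_i(x)) \setminus S_i| < 3s$ for every good $a$ (by the stopping condition), which is exactly the lemma.

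The main subtlety is that ``good'' is defined relative to the random $x$, so one cannot directly fix a good tuple $\vec{a}$ in advance and compute probabilities over $x$. The key is the implication ``$a$ good $\Rightarrow T(a, B_i(x))$ agrees with $x$,'' which converts the seemingly random event ``all $a^{(j)}$ good'' into the concrete linear constraint ``$x$ matches $U$ on $|U|$ coordinates,'' after which the counting is just a union bound. The other thing to track carefully is that the preliminary step is needed both for $r' \leq r$ (to bound $|S_i|$) and for $r' = r+1$ (to rule out an extra greedy step); both cases are handled uniformly by letting $r'$ range up to $r+1$ in the sum.
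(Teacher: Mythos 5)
Your proposal is correct and follows essentially the same route as the paper: a counting/union-bound argument showing that a (random) $x$ cannot be consistent with any union of $\approx r=\ell/s$ output sets $T(a^{(j)},B_i)$ of size exceeding $3\ell$ (using that goodness forces each $T$ to match $x$, and that the union is determined by the $\approx 2\ell$ bits of $(\vec a, B_i)$), followed by the same greedy construction of $S_i$. Your minor refinements --- extending the union bound to $r'\le r+1$ to handle the off-by-one in the greedy step, and enumerating block contents $b$ explicitly --- only tighten the paper's argument and do not change the approach.
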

\begin{proof}


Let $a_i^{(1)}, \dots, a_i^{(r)}$ each be good $a_i$ (with respect to a particular choice of $x$ and $i$), where $r=\ell / s$. Consider the following union: \[\Tt = \bigcup_{1 \le j \le r} T(a_i^{(j)}, B_i).\]
Essentially, if we can show, for a particular $x$, that this union is always small, we will then be able to show that $T(a_i, B_i)$ cannot take too large a range of values as (good) $a_i$ varies, because the union of any $r$ such instances will have small size. To this end, we first show the following claim.
\begin{claim} \label{claim:T-small}
There exists $x$ such that, for every $i$, the union $\Tt$ always has size at most $3\ell$ (no matter the choice of $r$ good $a_i^{(j)}$'s).
\end{claim}
\begin{proof}
First observe that since each $T(a_i, B_i)$ must match $x$, this means that $\Tt$ must also match $x$ (recall that this means that for every $(j, b) \in \Tt$, we have $x_j=b$). However, $\Tt$ is a deterministic function of $(B_i, a_i^{(1)}, \dots, a_i^{(r)})$, which consists of $2\ell$ bits. Therefore, there are only at most $2^{2\ell}$ possible values that $\Tt$ can take for any particular $i$. Thus, in total (over all $i$) there are at most $n \cdot 2^{2\ell} < 2^{3\ell}$ possible values for $\Tt$.

However, each possible value of $\Tt$ that has size at least $3\ell$ can only match a $2^{-3\ell}$ proportion of $x \in \{0, 1\}^n$. Therefore, there exists some $x$ which does not match any possible $\Tt$ of size at least $3\ell$, thus proving the claim.
\end{proof}

Now fix $x$ such that \cref{claim:T-small} holds, and let $i$ be arbitrary. We will now finish the proof of \cref{lemma:lb-1} by constructing $S_i$. Let $\Ff$ be the family that consists of $T(a_i, B_i)$ for all good $a_i$. \cref{claim:T-small} means that the union of any $r$ sets in $\Ff$ has size at most $3\ell$. We wish to find $S_i$ which contains all but at most $3s$ elements of each $T \in \Ff$.  

Now, construct $S_i$ in steps as follows: at each step, find $T \in \Ff$ which has more than $3s$ elements which are not in $S_i$, and add all its elements to $S_i$. This process terminates when there is no such $T$ remaining. Obviously this set satisfies $|T \setminus S_i| \le 3s$ for all $T \in \Ff$, so it remains only to check that $|S_i| < 3\ell$. Indeed, suppose that $|S_i| \ge 3\ell$; consider the first step in which its size reached or exceeded $3\ell$. Note that at each step the size of $S_i$ increases by more than $3s$, so in total the number of steps for $|S_i|$ to reach $3\ell$ is at most $\f{3\ell}{3s} = r$. But then $S_i$ is the union of at most $r$ sets in $\Ff$ and has size at least $3\ell$, contradicting \cref{claim:T-small}. Thus, $S_i$ has the desired properties, completing the proof of \cref{lemma:lb-1}. 
\end{proof}

With this lemma proven, we return to the proof of \cref{thm:n2lower}. We will now demonstrate a strategy for the adversary such that, with probability at least $1/2n^2$, the decoding algorithm $\dec$ fails to output $x$. Fix the input $x$ and sets $S_i$ such that \cref{lemma:lb-1} is satisfied.

Now, the adversary picks a uniformly random index $j \in \{1, \dots, n/2\}$. Then, for each $i$ such that $S_i$ contains at least $5s(n)$ indices in $[j + 10(i-1)s, j+10is)$, the adversary replaces the whole block $B_i$ with zeros (unless it is the last block). We will first show that $\dec$ must fail on this input with probability at least $1/2n^2$. Let us suppose otherwise. 

As before, let $a_i$ be the (random) memory state of $\dec$ before processing $B_i$. Note that with probability at least $1/2$, all the $a_i$ are good (since in the cases where they are not, $\dec$ fails with probability at least $1/n^2$). If they are all good, then by the definition of $T$ and a union bound (over the block number $i$ and the indices $j$), with probability at least $0.99$, at every block $B_i$, the indices output during block $i$ are all in $T(a_i, B_i)$.  In this case, observe that during block $i$, $\dec$ may never output the index $j+10is$ (or any greater index). Indeed, if this were not the case, during some block $\dec$ would have to output everything in $[j + 10(i-1)s, j+10is)$, but then we would have $|T(a_i, B_i) \setminus S_i| > 5s$, contradicting \cref{lemma:lb-1}.

Thus, right before the last block, the algorithm cannot have output any index past $j + 10ks = j + n / 10$. Then, in the last block, the algorithm outputs at most $|S_i|+3s \le 3\ell + 3s$ by \cref{lemma:lb-1}. Therefore, overall, with probability at least $0.49$, the algorithm outputs at most $j + n/10 + 3\ell + 3s < n$ indices, and thus does not output all of $x$.

Therefore we have shown that, under this strategy for the adversary, the algorithm must fail on $x$ with probability at least $1/2n^2$. It remains only to show that the adversary deletes (i.e., replaces with $0$'s) at most an $\eps$ fraction of blocks. Indeed, it is enough to show that at most an $\eps$ fraction of blocks are deleted in expectation, since the adversary can pick $j$ such that the fewest blocks are deleted. Fix a block $B_i$. The probability that $B_i$ gets deleted is equal to the probability that $S_i$ has at least $5s$ indices in the interval $[j + 10(i-1)s, j+10is)$. For any fixed $j' \in B_i$, there are at most $10s$ choices of $j$ such that $j'$ lands in this interval, so the probability that it does is at most $20s/n$ (since $j$ is chosen uniformly at random from $n/2$ choices). Thus the expected number of indices in $S_i$ in the interval is $(20s/n)|S_i| \le 60s\ell/n$. By Markov's inequality, the probability that this is at least $5s$ is at most $12\ell/n < \eps$. Therefore, the probability that $B_i$ is replaced with $0$'s is at most $\eps$ for each block $B_i$. The expected number of blocks replaced by $0$'s is therefore at most $\eps k$.

Putting everything together, the adversary has a strategy which deletes at most an $\eps$ fraction of blocks (and thus at most an $\eps$ fraction of the bits) which causes $\dec$ to fail with probability at least $1/2n^2$. This completes the proof of \cref{thm:n2lower}.
\end{proof}
\renewcommand{\eps}{\varepsilon}
\section{Stream decodable codes for linear functions of near linear length} \label{sec:linear}

Our final result is a noise-resilient encoding of essentially linear length that admits efficient stream decoding of arbitrary linear functions. The family of linear functions is defined as the functions $f:\{0,1\}^{n} \to \{0,1\}$ for which there exists $y\in \{0,1\}^n$ such that $f(x) = x\cdot y \mod 2$.

\linearconst*

\paragraph{Parameters and notation.} Throughout this section, fix $\eps$ (we will hide dependence on $\e$ in big $O$ notation), $\delta$ if it exists, and the space function $s$. Let $n$ represent the length of Alice's message. 
We assume that $s(n)>(\log n)^{1000}$. Set the following parameters:
\[
    r = s(n)^{0.2} \quad \text{and} \quad d=\frac{\log n}{\log r} \quad \text{and} \quad \eps'=\frac{\eps}{10d} ~.
\]
    Let $\bbK$ be $\bbF_{2^k}$ where $\e'^{-10} < 2^k \le 2\e'^{-10}$ so that the condition of Theorem~\ref{thm:ldc-large} is satisfied for $\bbK$ and $\e'$.


Let $\LDC : \bbK^r \to \bbK^R$ with be a linear locally decodable code satisfying the guarantees of Theorem~\ref{thm:ldc-large} for $\eps'$. The locality is $Q>\left( \eps'^{-1}\log r \right)^{100}$. This is satisfied whenever $Q>\left( d\log r\right)^{1000}$ because 
\[
    \left( \eps'^{-1}\log r \right)^{100} \leq \left( d\eps^{-1}\log r \right)^{100} \leq \left( d (\log r)^2 \right)^{100} \leq \left( d\log r\right)^{1000}
\]
since $\log r$ is sufficiently large compared to $\eps^{-1}$. 
We will set $Q$ subject to this constraint later. Also note that $Q>\eps^{-1}\log n = \eps^{-1}d\log r$ which is a fact we will use later and $Q<r$ must be satisfied. 

This gives us a value of $R\geq (\eps'^{-1}\log_Q r)^{100\log_Q r}$ which is satisfied if $R\geq (d\log_Q r)^{150\log_Q r}$. We will actually set $R$ later, subject to this constraint. We can make $R$ larger as needed by a variety of methods, such as padding the input or duplicating each bit of the code.

We assume for simplicity that $r$ and $d$ are integers. It will be useful to index Alice's (the sender's) input $x\in \{0,1\}^n$ by a tuple in $[r]^d$ rather than an integer in $[n]$. Whenever we say an event occurs with high probability, we mean with high probability in $n$ unless specified otherwise. We refer the reader to Section~\ref{sec:prelims} to review notation used in this section.

\subsection{Statement of encoding scheme}

The encoding $\enc(x)$ that Alice (the encoder) sends in the stream is a tensor code. To this end, we begin by defining a tensor power of a linear code $\C$. We remark that tensor products of distinct linear codes can also be defined, but we will only need to take a tensor power of one code.

\begin{definition}[$\C^{\otimes k}$] \label{def:tensor}
    Let $\C: \bbK^m \to \bbK^{M}$ be a linear error correcting code on strings of length $N\in \bbN$ on some alphabet $\bbK$. Since the code is linear, $\C$ is an $M\times m$ matrix. Then the $k$-th tensor power of this encoding matrix $\C^{\otimes k}: \{0,1\}^{[m]^k} \to \{0,1\}^{[M]^k}$ is the encoding function $\C^{\otimes k}$. We note that for any code $\C$, it holds that $\C^{\otimes 0} : \bbK \to \bbK$ is the identity function.
\end{definition}

Next, we will state the encoding $\enc(x)$ that Alice (the encoder) sends in the stream.

\begin{definition}[$\C_{inner}$] \label{def:c-inner}
    Let $\C_{inner} : \bbK \to \{0,1\}^{O(1)}$ be a distance $(1-\eps/4)$ linear code guaranteed by Theorem~\ref{thm:binary-ecc}. It's length is $N_{inner}$.
\end{definition}

\begin{definition}[$\enc(x)$] \label{def:enc-linear}
    Alice's encoding is $\enc(x)$ is defined as follows. Viewing $x$ as an element of $\bbK^n$ (which we may since $\bbK$ is of the form $\bbF_{2^k}$, she computes $\LDC^{\otimes d}(x)$ (where the message and codeword bits are both taken in lexicographic order) and concatenates this with $\C_{inner}$.
\end{definition}

We note that Alice's encoding is length $R^d$, and her encoding takes time at most $\poly(n)$. We'll later choose our parameters to satisfy the conditions of Theorem~\ref{thm:linear-const}.

\subsection{Statement of decoding scheme}

Let Bob's private vector be $\ell = \langle \ell_{(1,\ldots,1)}, \ldots,\ell_{(r,\ldots,r)} \rangle$, (here the ordering is lexicographic). The function Bob is trying to compute is $\ell \cdot x=\ell_{(1,\ldots,1)}x_{(1,\ldots,1)} + \ldots + \ell_{(r,\ldots,r)}x_{(r,\ldots,r)}$. For $a\leq d$, the vector $\ell_{(i_1\ldots i_{d-a})}$ is defined to be an $r^a$ dimensional vector that denotes $\ell$ restricted to indices where the first $d-a$ entries of the tuple are $(i_1\ldots i_{d-a})$. Throughout, we will view $\ell$ and $x$ as elements of $\bbK$ rather than $\bbF_2$ and note that it suffices to compute $\ell\cdot x$ in $\bbK$.
The same notation applies for $x$ or any other string canonically indexed by tuples.


Before we state our main algorithm, we efficiently construct lists $\qlist_1, \ldots, \qlist_r$ satisfying certain properties that Bob can find in $s(n)$ space and $\poly(n)$ time. These lists will be the indices of LDC that we query for each $i$, and it will be important that they overlap as little as possible.

\begin{lemma} \label{lem:qlist}
    Given a locally decodable code $\LDC: \bbK^m \to \bbK^M$ satisfying $m>\log n$ with $q$ queries satisfying the requirements of Theorem~\ref{thm:ldc-large} for $\eps'$, there is a (randomized) algorithm (permitted $\frac{1}{m^{\omega(1)}}$ probability of failure) that generates lists $\qlist_1, \ldots, \qlist_m$ in time $\poly(mq)$ and space $O(mq^2)$ such that the following holds. The smooth local decoding algorithm for each index $i$ queries only the indices $\qlist_i$, and for each  $z\in\{0,1\}^M$, with high probability in $m$, satisfies the smoothness guarantees in Theorem~\ref{thm:ldc-large} for all $i$. Moreover, no $I \in [M]$ appears in more than $\left\lceil\frac{3mq^2}{M}\right\rceil$ lists.
\end{lemma}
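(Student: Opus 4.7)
The plan is to generate each list $\qlist_i$ by independently sampling from the smooth local decoder's natural query distribution guaranteed by \cref{thm:ldc-large}. Specifically, for each $i \in [m]$ in sequence, I would run the smooth decoder with input index $i$ using fresh randomness, and set $\qlist_i$ to be the set of at most $q$ indices in $[M]$ that this run would query. The decoding itself is never actually performed during list generation --- only the query locations are extracted and stored. Each decoder invocation takes $\poly(q)$ time and $O(q^3)$ space (reusable across invocations), and the lists collectively occupy $O(mq)$ space, so the total cost is $\poly(mq)$ time and $O(mq^2)$ space (using that $q \le m$ for the LDC of \cref{thm:ldc-large}).

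For the smoothness requirement, \cref{thm:ldc-large} ensures that for any fixed $z \in \{0,1\}^M$ and any fixed $i$, the decoder's output using the queries in $\qlist_i$ obeys the smoothness inequality except with probability $1/m^{\omega(1)}$ over the randomness baked into $\qlist_i$. A union bound over $i \in [m]$ extends this to all $i$ simultaneously, still with failure probability $1/m^{\omega(1)}$.

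For the balance property, I would use the final clause of \cref{thm:ldc-large}, which states that any specific index $I$ is queried with probability at most $1.1q/M$ per decoder invocation. Thus $X_I := |\{i : I \in \qlist_i\}|$ is a sum of $m$ independent Bernoullis of mean at most $1.1q/M$. With $c := \lceil 3mq^2/M \rceil$, a standard binomial tail bound gives
\[
    \Pr[X_I \ge c] \le \binom{m}{c}\left(\frac{1.1q}{M}\right)^c \le \left(\frac{1.1\,e\,mq}{cM}\right)^c \le q^{-c},
\]
using $cM \ge 3mq^2$. A union bound over $I \in [M]$ yields the balance property with probability $1 - Mq^{-c}$, which is $1 - 1/m^{\omega(1)}$ whenever $c\log q \ge \log M + \omega(\log m)$.

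The main obstacle I expect is verifying this last inequality uniformly across the relevant parameter regimes --- in particular when $3mq^2/M$ is close to $1$, the value of $c$ is small and the union-bounded Chernoff estimate may not by itself drop below $1/m^{\omega(1)}$. In the context of the application above, where $q \ge (\log m)^{\Omega(1)}$ and $M$ is polynomially related to $m$, so that $\log M = O(\log m)$ and $\log q = \Omega(\log\log m)$, verifying $c\log q = \omega(\log m + \log M)$ reduces to plugging in the chosen parameters. If needed, the construction can be boosted by sampling $O(\log m / \log q)$ independent decoder invocations per index and taking the union of their queries as $\qlist_i$, which pushes $c$ above the required threshold without affecting the asymptotic resource bounds and, by another union bound on per-invocation success, without harming the smoothness guarantee.
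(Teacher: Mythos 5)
Your plan (sample each $\qlist_i$ independently from the decoder's query distribution, then control the maximum load by a binomial tail bound plus a union bound) diverges from the paper's construction, and the divergence matters: the obstacle you flag at the end is fatal in exactly the parameter regime the lemma is used in. In the application, $m=r$, $q=Q$, $M=R$ with $R\ge r\cdot(\log n)^{200\log_Q r}\gg 3rQ^2$, so $\left\lceil 3mq^2/M\right\rceil=1$ and the lemma is demanding that the $m$ lists be pairwise \emph{disjoint}. Under independent sampling the expected number of colliding pairs is on the order of $m^2q^2/M$, which in that regime can be $(\log n)^{\Theta(t)}$; so the balance property fails not just with probability larger than $1/m^{\omega(1)}$ but with probability close to $1$. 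Your proposed repair makes this worse rather than better: taking the union of several decoder invocations per index only increases the load on every $I$, and reinterpreting $q$ as the inflated query count changes the threshold $\left\lceil 3mq^2/M\right\rceil$ itself --- which you cannot afford, since the downstream space recursion in Lemma~\ref{lem:linear-space} raises this cap to the $d$-th power and needs it to equal $1$.

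The paper instead enforces the cap by construction: it generates the lists one at a time and, whenever a freshly drawn list would push some index past the cap, it \emph{resamples} that list (up to $q^2$ times). The point is a counting argument: at any moment at most $qm/\left\lceil 3mq^2/M\right\rceil< M/(2q)$ indices are already at the cap, and since each index is queried with probability at most $1.1q/M$, a fresh draw hits an at-cap index with expected multiplicity at most $0.55$; by Markov each resample succeeds with probability at least $0.4$, so all resampling terminates except with probability $m^{-\omega(1)}$ (using $q\ge(\log m)^{100}$). Smoothness then survives a union bound over the at most $mq^2$ candidate invocations. Your Chernoff computation is correct and would suffice if one only needed the cap in the regime $c=\Omega(\log M)$, but to prove the lemma as stated you need some mechanism, like the paper's rejection sampling, that guarantees the cap outright.
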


\begin{proof}
    Generate lists $\qlist_i$ one-by-one according to the local decoding algorithm in Theorem~\ref{thm:ldc-large}. An apparent contradiction occurs if a given $I\in [M]$ appears for the $\left( \left\lceil\frac{3mq^2}{M}\right\rceil +1\right)$-th time. In this case, we re-sample up to $q^2$ times, until the contradiction no longer occurs. The time guarantee is satisfied because the local decoding algorithm takes $\poly(mq)$ time, and we are only resampling up to $\poly(q)$ times. The space guarantee is satisfied because only $m$ lists of length $q$ are being stored at any time. The elements of each list are indices of length 
    \[
        \log M = \log \left( m \cdot (\log_q m)^{O(\log_q m)} \right) O(\log (m)^2) < O(q).
    \]
    This is true because $q > \log(m)^4$, which is a constraint already present in Theorem~\ref{thm:ldc-large}.
    Along with this, we need a counter of length at most $\log(q^2)$. Also, generating the candidate lists requires significantly less than $O(mq)$ space. 
    
    We now show that with high probability in $m$, the resampling process will terminate within $q^2$ steps. In other words, all chosen indices will have appeared at most $\left\lceil\frac{3mq^2}{M}\right\rceil$ times so far. Say an index is \emph{bad} if it has appeared at least $\left\lceil\frac{3mq^2}{M}\right\rceil$ times so far. At any point, there are at most $\frac{qm}{\left\lceil\frac{3mq^2}{M}\right\rceil}<\frac{M}{2q}$ indices that are bad. Since in our choosing process, every index has at most a $\frac{1.1q}{M}$ chance of getting chosen by Theorem~\ref{thm:ldc-large}, on average $0.6$ bad indices are chosen. By Markov's inequality, this means that the probability $0$ bad indices are chosen is at least $0.4$. Iterating $q$ times gives a success probability of at least $1-2^{-q}=1-m^{-\omega(1)}$ of finding a subset that doesn't violate the condition that each $I$ appears infrequently.

    Now, we need to verify that all the decoding algorithms will output a valid pair of probabilities $p(0),p(1)$ with high probability in $m$ for a given string $z$. This is true because we only generated $m$ independent decoding algorithms, and Theorem~\ref{thm:ldc-large} guaranteed a $1-\frac{1}{m^{\omega(1)}}$ success probability of each algorithm, so by union bound, the success probability is still $1-\frac{1}{m^{\omega(1)}}$.
\end{proof}

We are now ready to state the main algorithm. We refer the reader to Section~\ref{sec:overview} for an informal description of the algorithm.

\begin{algorithm}[H]
\caption{Bob's decoding algorithm $\mathsf{linear\_dec}$}
\label{alg:linear}
\renewcommand{\thealgorithm}{}
\floatname{algorithm}{}

\begin{algorithmic}[1]

\State \textbf{input:} $n, d\in \bbN$, $\LDC : \{0,1\}^r \to \{0,1\}^R$, and stream $z \in \{ 0, 1 \}^{R^d}$. 
\State

\Function{$\mathsf{recurse\_linear}$}{integer $a \leq d$, stream $z \in \{0,1\}^{N_{inner}R^a}$, tuple $T\in [r]^{d-a}$}: \Comment{At each recursive layer, want to compute $\ell_T\cdot x$.}

    \If {$a=0$}
        \State Set $\sigma$ to the decoding of $z$ via $\C_{inner}$ with probability $1-2\delta(z,\C_{inner}(\sigma))$ and otherwise to $\bot$. This is the only step where the stream is read, and it happens in unison for all instances reading this stream; the randomness in deciding $\sigma$ is the same for all of them. \label{line:C-inner}
        \State \Return $\ell_T\cdot \sigma$ ($\bot$ if $\sigma=\bot$).
    \EndIf
    \State Pick lists $\qlist_1, \ldots, \qlist_r$ as in Lemma~\ref{lem:qlist}. \Comment{These are the indices of $\LDC$ queried by a local decoding algorithm for each message index, and no index appears in more than $\left\lceil\frac{3rQ^2}{R}\right\rceil$ lists.}
    \State Define maps $\cM_i: \qlist_i \to \bbK$ for $i\in [r]$ as follows:  \Comment{Outputs of the queries $\qlist_i$ made to a corrupted version of $\LDC \left(\ell_{T|i}\cdot x_1 \ldots \ell_{T|i}\cdot x_r \right)$ (computed recursively), used to decode $\ell_{T|i}\cdot x_i$.} 
    \For {$I \in [R]$}
        \State Let $z_I$ denote the next $R^{a-1}$ elements of the stream $z$. \Comment{Just notation, this step does not read any of $z$.}
        \For {$i$ such that $I \in \qlist_i$}
            \State Set $\cM_i[I] \gets \mathsf{recurse\_linear}(a-1, z_I, T|i)$. \Comment{Run all in parallel, since they all use the same stream $z_I$.}
        \EndFor
    \EndFor
    \For {$i \in [r]$}
        \State Compute $p_i(\sigma)$ for all $\sigma\in \bbK$ using local decoding on index $i$ with queries $\cM_i$. 
        \State Set the guess $\sigma_i = \argmax_{\sigma}p_i(\sigma) \in \bbK$ and likelihood $p_i=\max_\sigma p_i(\sigma)$.
    \EndFor
    \State Let $p=\min_i p_i$.
    \State \Return $\sigma_1+\ldots+\sigma_r$ with probability $p$ and otherwise $\bot$.

\EndFunction
\State

\State Compute $\mathsf{recurse\_linear}(d, z, ())$ in parallel $(\log n)^2$ times. When the output is $\bot$, assign it $0$ or $1$ randomly and otherwise project to $\bbF_2$. Let majority output be $\sigma$. \Comment{$\mathsf{recurse\_linear}(d, z, ())$ is a guess for $\ell\cdot x$.}
\State \textbf{output:} $\sigma$.

\end{algorithmic}
\end{algorithm}

\subsection{Proof of Theorem~\ref{thm:linear-const}}

First, we'll show that the protocol runs in time $\poly(n)$. 

\begin{lemma} \label{lem:linear-time}
    Bob's decoding algorithm $\mathsf{linear\_dec}$ runs in time $\poly(n)$.
\end{lemma}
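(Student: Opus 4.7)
The plan is to set up a recurrence for the running time of a single invocation of $\mathsf{recurse\_linear}$ at recursion depth $a$, and then unroll it. Let $T(a)$ denote the worst-case running time of $\mathsf{recurse\_linear}(a,\cdot,\cdot)$.

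For the base case $a=0$, the call simply decodes one symbol of $\C_{inner}$, which has constant length, so $T(0)=O(1)$. For the recursive case $a\ge 1$, the algorithm first generates the lists $\qlist_1,\dots,\qlist_r$ using Lemma~\ref{lem:qlist}, which costs $\poly(rQ)$ time. It then makes, for each $I\in[R]$, a recursive call at depth $a-1$ for every $i$ with $I\in \qlist_i$. Since $|\qlist_i|=Q$ for each $i$, the total number of recursive calls is exactly $\sum_{i=1}^{r}|\qlist_i|=rQ$ (the fact that many of them share the same stream slice $z_I$ is a streaming/space observation, not a time one, so all $rQ$ calls contribute to the time budget). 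Finally, for each $i\in[r]$ the smooth local decoding of Theorem~\ref{thm:ldc-large} costs $O(Q^3)$ time, and the concluding summation $\sigma_1+\dots+\sigma_r$ costs $O(r)$. Combining, I obtain the recurrence
\[
T(a)\;\le\; rQ\cdot T(a-1)\;+\;\poly(rQ).
\]

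Unrolling gives $T(d)\le (rQ)^{d}\cdot T(0)+d\cdot(rQ)^{d}\cdot\poly(rQ)=(rQ)^{O(d)}$. The main check is that this is $\poly(n)$. By the parameter setup we have $r=s(n)^{0.2}$ and $d=\log n/\log r$, so $r^{d}=n$; moreover the construction imposes $Q<r$, so $(rQ)^{d}\le r^{2d}=n^{2}$, giving $T(d)\le n^{O(1)}$.

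The outer algorithm invokes $\mathsf{recurse\_linear}(d,z,())$ a total of $(\log n)^2$ times and takes a majority, multiplying the total runtime by a polylogarithmic factor, which still leaves $\poly(n)$. The only place where one must be careful is to avoid conflating the "in parallel" streaming execution with free computation: the $rQ$ recursive calls at each level share the input stream so that no bit is reread, but each of them is still an independent computation whose time is charged in the recurrence. With this bookkeeping, the bound $T(d)=\poly(n)$ follows and the lemma is established.
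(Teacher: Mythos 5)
Your proof is correct and follows essentially the same route as the paper's: an induction on the recursion depth $a$ yielding a multiplicative recurrence that unrolls to a $\poly(n)$ bound (your count of $rQ=\sum_i|\qlist_i|$ recursive calls per level is in fact slightly tighter than the paper's bound of $rR$). The only nitpick is that the base case is $\polylog(n)$ rather than $O(1)$ (indexing into $\ell$ with a length-$d$ tuple and multiplying in $\bbK$, whose size grows with $d$), and the additive term per level should also absorb the loop over all $I\in[R]$; neither affects the conclusion.
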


\begin{proof}
    We'll show by induction that each iteration of $\mathsf{recurse\_linear}(a,z,T)$ takes time $O(R^{2a}\cdot \polylog n)$. 

    This holds if $a=0$: Bob only needs to process a tuple $t$ of length $d<\polylog n$ and index into $\ell$. This takes time $d\log r< \polylog n$. Then, he only has to compute the product of two elements of $\bbK$ which also takes time less than $\polylog(n)$.

    Next, we establish the inductive step. First, Bob computes at most $rR<R^2$ iterations of $\mathsf{recurse\_linear}(a-1,z_I,T|i)$. Each takes time $O(R^{2(a-1)}\cdot \polylog n)$, so in total this takes $O(R^{2a}\cdot \polylog n)$. Next, Bob computes each $b_i$ and confidence $c_i$. This takes time at most $\polylog n$ for each $i$ from the local decoding algorithm, so time $r\cdot \polylog n<R\cdot \polylog n$ in total. Adding them and returning a guess according to the confidence takes time at most $R\cdot \polylog n$ as well.
\end{proof}

Next, we show the space guarantee of the protocol.

\begin{lemma} \label{lem:linear-space}
    Consider a call to the function $\mathsf{recurse\_linear}(a,z,T)$. The amount of space used by the algorithm is $O\left(a \left\lceil\frac{3rQ^2}{R}\right\rceil^a \cdot RQ^2\right)$.
\end{lemma}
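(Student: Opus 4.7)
The plan is to prove the bound by induction on the recursion depth $a$, with inductive hypothesis $S(a)\le C\cdot a\cdot B^a\cdot RQ^2$ for a sufficiently large absolute constant $C$, where $B:=\lceil 3rQ^2/R\rceil$. The base case $a=0$ is immediate: the function only decodes a single $\C_{inner}$ codeword of constant length $N_{inner}$ and computes one $\bbK$-multiplication $\ell_T\cdot\sigma$, both of which fit in $O(1)$ bits, comfortably within the claimed bound.

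For the inductive step I would carefully enumerate what must be stored at the top level of a call to $\mathsf{recurse\_linear}(a,z,T)$, separately from the memory consumed inside recursive subcalls. First, by \cref{lem:qlist}, generating the $r$ lists $\qlist_1,\dots,\qlist_r$ uses $O(rQ^2)$ scratch space, and the lists themselves, which must persist for the duration of the outer loop, take $O(rQ\log R)$ bits (which is $O(rQ^2)$ since $\log R<Q$ by the parameter setup). Second, the partially filled maps $\cM_i$ collectively store at most $rQ$ elements of $\bbK$, i.e.\ $O(rQ\log|\bbK|)$ bits; since $|\bbK|=O((d/\eps)^{10})$, this is negligible compared to $rQ^2$. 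Third, running counters (the current $I\in[R]$, the tuple $T$ of length at most $d$, and the accumulator $\sigma_1+\cdots+\sigma_r$) are $O(d\log R)$ bits. Finally, the closing local-decoding computations producing each $p_i(\sigma)$ can be executed one $i$ at a time, each costing $O(Q^3)$ space by \cref{thm:ldc-large} and released before moving to the next $i$. Lumping everything together and using $r\le R$, the top-level overhead is at most some $C\cdot RQ^2$.

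The other ingredient is the number of recursive subcalls that are simultaneously live. While the outer loop is processing a particular $I\in[R]$, one call to $\mathsf{recurse\_linear}(a-1,z_I,T|i)$ is spawned for each $i$ with $I\in\qlist_i$, and crucially all such calls consume the same substream $z_I$ in lockstep (they share a single stream pointer). By the second guarantee of \cref{lem:qlist}, the number of such $i$ is at most $B$, so at any moment at most $B$ level-$(a-1)$ calls are alive, contributing $B\cdot S(a-1)$ bits; their working memory is freed before the algorithm advances to $I+1$. This yields the recurrence $S(a)\le B\cdot S(a-1)+C\cdot RQ^2$, which unrolls to $S(a)\le B^a S(0)+C\cdot RQ^2\cdot(1+B+\cdots+B^{a-1})=O(aB^a\cdot RQ^2)$, regardless of whether $B=1$ or $B\ge 2$. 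This matches the claimed bound.

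The main obstacle I anticipate is making the ``at most $B$ simultaneously live subcalls'' count rigorous. It rests on two points that must be spelled out carefully: (i) the parallel recursive calls within a single $I$-iteration really do advance through identical substream positions, so that they can share the stream pointer rather than each requiring an independent copy of $z_I$; and (ii) the $\qlist_i$ are generated once at the entry of each call and stored, so their generation cost $O(rQ^2)$ is incurred once per call rather than once per $I$. A minor subtlety is that $\log R$ and $\log|\bbK|$ must be verified to be dominated by $Q$ under the parameter regime of this section, so that the various bit-level accountings really do collapse into $O(RQ^2)$; but this follows from $Q\ge(d\log r)^{1000}$ set at the start of \cref{sec:linear}. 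Once these points are pinned down, the induction and the geometric-series bookkeeping are routine.
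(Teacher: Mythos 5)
Your proposal is correct and follows essentially the same route as the paper: induction on $a$, a top-level overhead of $O(RQ^2)$ (list generation and storage, the maps $\cM_i$, and the final local decodings), at most $\lceil 3rQ^2/R\rceil$ simultaneously live recursive calls sharing the stream $z_I$ by the second guarantee of \cref{lem:qlist}, and the recurrence $S(a)\le \lceil 3rQ^2/R\rceil\,S(a-1)+O(RQ^2)$ unrolled to the stated bound. The only (immaterial) deviation is your base-case accounting of $O(1)$ bits, where storing the tuple $T$ actually costs $d\log r$ bits as the paper notes, but this is still far below $O(RQ^2)$ and does not affect the induction.
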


\begin{proof}
    Throughout, we'll implicitly use the bound that $Q>\eps^{-1}\log n$.
    
    We'll show this by induction on $a$; let $f(a)$ be the inductive bound on space for $a$. For the base case of $a=0$, the function accesses $\ell_T$, which takes $d \log r \le O(RQ)$ space to write down $T$. It then calculates the product of two elements of $\bbK$, which takes $O(\log |\bbK|)<O(10\log(\eps^{-1}d))< O(Q)$ space. Thus $f(0) = O(Q) \le O(RQ^2)$, as desired.

    For the inductive step, the first piece of the algorithm is to generate the lists $\qlist_i$. We showed in Lemma~\ref{lem:qlist} that this takes $O(rQ^2)<O(RQ^2)$ space. Then, we store the maps $\cM_i$. Each of these takes space $O(Q(\log r+\log |\bbK|))< O(Q^2)$, for a total of $O(RQ^2)$ space. Next, for each $I\in [R]$, we'll run a recursive calculation on the stream $z_I$. For a fixed $I$, we have multiple recursive calls that run in parallel. By the guarantees of our particular $\LDC$, each index appears in at most $\left\lceil\frac{3rQ^2}{R}\right\rceil$ lists, and so the number of calls occurring in parallel is at most $\left\lceil\frac{3rQ^2}{R}\right\rceil$. The final calculation of $\sigma_1,\ldots, \sigma_r$ also requires $O(RQ^2)$ space. Therefore, in total, the space usage is at most $f(a) \le O(RQ^2)+\left\lceil\frac{3rQ^2}{R}\right\rceil \cdot f(a-1)$. Solving this recursion indeed gives $f(a) \le O\left(a \left\lceil\frac{3rQ^2}{R}\right\rceil^a \cdot RQ^2\right)$, as desired. 
\end{proof}

\begin{cor} \label{cor:linear-space}
    With high probability, the amount of memory used by Algorithm~\ref{alg:linear} is at most $O\left(\left\lceil\frac{3rQ^2}{R}\right\rceil^d \cdot RQ^3\right)$. (We will show later that this is at most $s(n)$.)
\end{cor}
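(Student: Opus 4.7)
The plan is to combine \Cref{lem:linear-space} (applied at the top level $a = d$) with two additional sources of overhead: the $(\log n)^2$ parallel invocations of $\mathsf{recurse\_linear}(d, z, ())$, and the probabilistic failure of the list-construction routine of \Cref{lem:qlist} which must be accounted for at every recursive call. The bulk of the work is showing that these two overheads are absorbed into the $Q$ factor in the claimed bound.

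First I would observe that \Cref{lem:linear-space} with $a = d$ gives that one invocation of $\mathsf{recurse\_linear}(d, z, ())$ uses at most $O\bigl(d \lceil 3rQ^2/R\rceil^d \cdot R Q^2\bigr)$ space. Since Algorithm~\ref{alg:linear} runs $(\log n)^2$ such invocations in parallel (together with $O(\log n)$ additional space for bookkeeping a majority vote and a counter), the total space is at most $O\bigl((\log n)^2 \cdot d \cdot \lceil 3rQ^2/R\rceil^d \cdot R Q^2\bigr)$. To collapse this into the claimed $O\bigl(\lceil 3rQ^2/R\rceil^d \cdot R Q^3\bigr)$ bound, it suffices to observe that $(\log n)^2 \cdot d \le Q$: indeed, $d = \log n/\log r$ and the standing assumption $Q > (d \log r)^{1000} \ge (\log n)^{1000}$ gives this slack by an enormous margin. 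So the deterministic space accounting goes through.

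Next I would justify the "with high probability" qualifier. The only randomized component of the space usage is the list-generation routine in \Cref{lem:qlist}, which at each recursive node requires the resampling process to terminate within $Q^2$ steps; this succeeds except with probability $r^{-\omega(1)}$ for each node. The total number of calls to $\mathsf{recurse\_linear}$ (summed over all recursion levels and all parallel copies) is at most $(\log n)^2 \cdot r \cdot R^{d-1} \le \poly(n)$, so a union bound over all nodes gives overall failure probability $n^{-\omega(1)}$ for the list-generation step. Conditioned on all these succeeding, the space bound of \Cref{lem:linear-space} holds deterministically at every node, and the argument above applies.

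The main "obstacle"—really the only thing that requires thought—is checking that the polylogarithmic overhead from the $(\log n)^2$ parallel runs and the factor of $d$ in \Cref{lem:linear-space} can be absorbed into a single extra factor of $Q$; this is essentially immediate from our choice $Q > (d\log r)^{1000}$ and $d \log r = \log n$, but it is worth stating explicitly to make clear that we are not hiding dependencies on $n$. Everything else is plugging in and union-bounding.
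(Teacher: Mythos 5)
Your proposal is correct and follows essentially the same route as the paper: apply \Cref{lem:linear-space} at $a=d$ and absorb the remaining factors into the extra power of $Q$ using $Q>(d\log r)^{1000}=(\log n)^{1000}$. You are in fact slightly more careful than the paper's own (very terse) proof, which does not explicitly account for the $(\log n)^2$ parallel invocations or union-bound the failure probability of the list-generation step.
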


\begin{proof}
    The space required to instantiate the algorithm including the lists $\qlist_i$ is less than $O\left(d\left\lceil\frac{3rQ^2}{R}\right\rceil^d \cdot RQ^2\right)$, and so the total space is dominated by the recursive call $\mathsf{recurse\_linear}(d,z,())$, which requires space $O\left(d\left\lceil\frac{3rQ^2}{R}\right\rceil^d \cdot RQ^2\right) \le O\left(\left\lceil\frac{3rQ^2}{R}\right\rceil^d \cdot RQ^3\right)$ by Lemma~\ref{lem:linear-space}. 
\end{proof}

Next, we show the correctness guarantees of the protocol -- that the decoding algorithm outputs the correct answer with high probability. For simplicity of the argument, it'll be convenient to show a lemma about the following function $\mathsf{recurse\_linear}(a,z',T)$. This function is the same as $\mathsf{recurse\_linear}(a,z,T)$, except that the phantom stream $z'\in (\bbK \cup \bot)^{[R]^a}$ instead of in binary. As such, it is the identical procedure, but we leave out Line~\ref{line:C-inner} and instead let the next symbol of $z'$ be $\sigma$ directly.

\begin{lemma} \label{lem:linear-correct}
    Consider a call to the function $\mathsf{recurse\_linear'}(a,z',T)$. 
    Let $m\in \bbK^{[r]^a}$ be an arbitrary message, let $z'\in (\bbK \cup \bot)^{[R]^a}$, and let $\theta:=\delta(z',\LDC^{\otimes a}(m))$. Then, let $p_{correct}$ be the probability $\mathsf{recurse\_linear}(a,z',T)$ outputs $\ell_T\cdot m$, and $p_\bot$ be the probability it outputs $\bot$. Then, $p_{correct}+0.5p_\bot \geq 1-\theta-2a\eps'$.
\end{lemma}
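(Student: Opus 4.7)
My plan is to prove this by induction on $a$. The base case $a=0$ is immediate: the function reads the single symbol $z' \in \bbK \cup \{\bot\}$ and returns $\ell_T \cdot z'$ (or $\bot$ if $z'=\bot$), and the three cases $z' = m$ (so $\theta = 0$, $p_{\text{correct}} = 1$), $z' = \bot$ (so $\theta = 1/2$, $p_\bot = 1$), and $z' \in \bbK \setminus \{m\}$ (so $\theta = 1$, trivial) directly verify the inequality.

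For the inductive step, I would first pin down the target of each recursive call via the tensor-linear structure of $\LDC^{\otimes a}$: the $I$-th slice of $\LDC^{\otimes a}(m)$ equals $\LDC^{\otimes(a-1)}(m^{(I)})$ with $m^{(I)} := \sum_{i'} c_{I,i'} m_{i'}$, where $c_{I,i'}$ are the generator entries of $\LDC$ and $m_{i'}$ is the sub-tensor of $m$ indexed by first coordinate $i'$. Setting $\theta_I := \delta(z_I, \LDC^{\otimes (a-1)}(m^{(I)}))$ gives $\tfrac{1}{R}\sum_I \theta_I = \theta$, and applying the inductive hypothesis to $\mathsf{recurse\_linear'}(a-1, z_I, T|i)$, whose target is $\ell_{T|i} \cdot m^{(I)}$, yields $\Pr[\text{correct}] + \tfrac{1}{2}\Pr[\bot] \geq 1 - \theta_I - 2(a-1)\eps'$ for each $(I,i)$. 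By linearity, $\ell_{T|i} \cdot m^{(I)} = \LDC(v^{(i)})[I]$ where $v^{(i)}_{i'} := \ell_{T|i} \cdot m_{i'}$, so the ``clean'' map $\cM_i^{\text{clean}}$ is precisely the $\LDC$-encoding of $v^{(i)}$, whose $i$-th message symbol is the desired $\ell_{T|i} \cdot m_i$.

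Averaging over $I$ gives the expected effective distance $\bbE[\delta_{\text{eff}}(\cM_i, \LDC(v^{(i)}))] \leq \theta + 2(a-1)\eps'$ for each $i$ (with $\bot$ at weight $1/2$). Since the streams $z_I$ are disjoint and the recursive calls for different $I$ can be made to use independent randomness, I would then apply Chernoff/Hoeffding plus a union bound over $i \in [r]$ to conclude that with high probability in $n$, $\max_i \delta_{\text{eff}}^{(i)} \leq \theta + 2(a-1)\eps' + o(\eps')$. Feeding this into the smoothness guarantee of Theorem~\ref{thm:ldc-large} yields $p_i(\ell_{T|i} \cdot m_i) + \tfrac{1}{2} p_i(\bot) \geq 1 - \delta_{\text{eff}}^{(i)} - \eps'$ for every $i$. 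Whenever $\delta_{\text{eff}}^{(i)} + \eps' < 1/2$, the unique $\bbK$-guess $\sigma_i$ coincides with $\ell_{T|i} \cdot m_i$ (since at most one $\bbK$-symbol has positive mass) and $p_i \geq 1 - 2(\delta_{\text{eff}}^{(i)} + \eps')$; by linearity $\sum_i \sigma_i = \ell_T \cdot m$, and since the algorithm returns this sum with probability $p = \min_i p_i$ and $\bot$ otherwise,
\[
p_{\text{correct}} + \tfrac{1}{2} p_\bot \;=\; \tfrac{1}{2} + \tfrac{1}{2}\bbE[p] \;\geq\; 1 - \max_i \delta_{\text{eff}}^{(i)} - \eps' \;\geq\; 1 - \theta - (2a-1)\eps' - o(\eps') \;\geq\; 1 - \theta - 2a\eps',
\]
absorbing the $1/n^{\omega(1)}$ failure events into the $o(\eps')$ slack. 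In the complementary regime (some $\delta_{\text{eff}}^{(i)} + \eps' \geq 1/2$), $p_i = 0$ forces $p = 0$, so the algorithm outputs $\bot$ deterministically and $p_{\text{correct}} + \tfrac{1}{2}p_\bot = \tfrac{1}{2}$; because $\theta$ must then be close to $1/2$, this still beats the target.

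The main obstacle I anticipate is establishing the concentration $\max_i \delta_{\text{eff}}^{(i)} \approx \theta + 2(a-1)\eps'$ uniformly in $i \in [r]$. This requires the recursive calls at different outer indices $I$ to use \emph{independent} randomness, which is compatible with the algorithm's convention (randomness is shared \emph{only} among parallel instances reading the same stream $z_I$) but demands careful bookkeeping of the random-bit budget during the recursion. Once independence across $I$ is secured, the remainder is a routine Chernoff-plus-union-bound argument, and the linear accumulation from $2(a-1)\eps'$ to $2a\eps'$ at each level produces exactly the claimed bound after $a$ inductive applications.
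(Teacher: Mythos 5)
Your overall strategy coincides with the paper's proof: induction on $a$ with the same three-case base case, the tensor/linearity identity $\ell_{T|i}\cdot \LDC(m)_I = \LDC(\widehat{m^{(i)}})_I$, concentration (via Hoeffding and independence across the disjoint sub-streams $z'_I$) of the fraction of erroneous recursive outputs, the smooth local decoding guarantee of Theorem~\ref{thm:ldc-large}, and a union bound over $i\in[r]$. Up to and including the inequality $p_i(\ell_{T|i}\cdot m_i)+\tfrac12 p_i(\bot) \ge 1-\delta_{\mathrm{eff}}^{(i)}-\eps'$, your argument is sound and matches the paper's.

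The gap is in your final case analysis. In the regime where some $\delta_{\mathrm{eff}}^{(i)}+\eps'\ge 1/2$, you assert that ``$p_i=0$ forces $p=0$, so the algorithm outputs $\bot$ deterministically.'' This does not follow: $p_i=\max_{\sigma\in\bbK}p_i(\sigma)$, and the fact that the \emph{correct} symbol may have zero mass does not prevent a \emph{wrong} symbol from carrying positive mass; in that event $p=\min_j p_j$ can be positive and the algorithm returns an incorrect sum with positive probability, so $p_{correct}+\tfrac12 p_\bot$ drops below $\tfrac12$ and your claimed equality fails. (Relatedly, the case split should be on whether $p_i(\ell_{T|i}\cdot m_i)>0$ for every $i$, not on the size of $\delta_{\mathrm{eff}}^{(i)}$: when $\delta_{\mathrm{eff}}^{(i)}+\eps'\ge 1/2$ your lower bound on the correct symbol's mass is merely vacuous, and either subcase can occur.) The repair --- which is how the paper argues --- is to invoke the smoothness guarantee once more in the bad case: if $p_i(\ell_{T|i}\cdot m_i)=0$ for some $i$, then $\tfrac12 p_i(\bot)\ge 1-\delta_{\mathrm{eff}}^{(i)}-\eps'$, hence $p\le p_i\le 1-p_i(\bot)$ and $p_\bot = 1-p \ge p_i(\bot)$, giving $p_{correct}+\tfrac12 p_\bot \ge \tfrac12 p_i(\bot) \ge 1-\delta_{\mathrm{eff}}^{(i)}-\eps' \ge 1-\theta-2a\eps'$. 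With that substitution your induction closes.
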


\begin{proof}
    We'll prove this statement by induction on $a$. 
    
    For the base case of $a=0$, since $\LDC^{\otimes 0}(m)=m$, we want to show that $p_{correct}+0.5p_\bot>1-\delta(z',m)$. If $z'=m$, then $p_{correct}=1$, $p_\bot=0$ and $\delta(z',m)=0$ so the equation is satisfied. If $z'=\bot$, then $p_{correct}=0$, $p_\bot =1$ and $\delta(z',m)=0.5$ so the equation is satisfied. If $z'\neq m$, then $p_{correct}\geq 0$, $p_\bot \geq 0$ and $\delta(z',m)=1$, so the equation is satisfied.

    Next, we do the inductive step. The first step of the algorithm is to compute $\cM_i[I] \gets \mathsf{recurse\_linear'}(a-1, z'_I, T|i)$ for all pairs $i\in [r],I\in [R]$ such that $I\in \qlist_i$. The second step is to perform a local decoding for each $i$ using these values. We will analyze this.
    
    Let $\LDC(m)$ be the string whose indices are in $[R]\times[r]^{a-1}$ obtained by applying the encoding function $\LDC$ to $m_1\ldots m_r$ on each bit of the strings separately. For all $I$, define $\theta_I = \delta(z'_I,\LDC^{\otimes a}(m)_I)$.  By the inductive hypothesis, the probability that $\mathsf{recurse\_linear'}(a-1, z'_I, T|i)$ does not output $\ell_{T|i}\cdot \LDC(m)_I$ (giving half credit to outputting $\bot$) is at most $\theta_I+2(a-1)\eps'$. We remark that we do not actually compute $\mathsf{recurse\_linear'}$ for all values of $I$, but we may imagine that we do for the purpose of analysis only.
    
    We'll say that the errors are the $I$ such that $\mathsf{recurse\_linear'}(a-1, z'_I, T|i)$ doesn't match $\ell_{T|i}\cdot \LDC(m)_I$, and that it's half an error if the output is $\bot$.
    By Hoeffding's inequality, the fraction of errors over all $I$ is at most $\theta+2(a-1)\eps'+\eps'/2$ with probability at least $1-\exp(-2\eps'^2 R)$. Note that $\eps'^2R>\eps'^2r>(\log n)^2$ for sufficiently large $n$ relative to $\eps^{-1}$, so this probability is at least $1-\eps'/(4r)$.
    
    For a fixed value of $i$, the goal is to estimate $\ell_{T|i}\cdot m_i$. Let's look at all the values (ranging over $I$) of $\mathsf{recurse\_linear'}(a-1, z'_I, T|i)$. 
    Let $\widehat{m^{(i)}}$ be the string $\ell_{T|i}\cdot m_1\ldots \ell_{T|i} \cdot m_r$. By linearity, $\ell_{T|i}\cdot \LDC(m)_I=\LDC(\widehat{m^{(i)}})_I$. The algorithm's steps are to query the bits in $\qlist_i$ on $\LDC(\widehat{m^{(i)}})$ and perform smooth local decoding on them. By the guarantee of the smooth local decoding algorithm in Theorem~\ref{thm:ldc-large}, with high probability in $r$ (and therefore probability at least $1-\eps'/(4r)$), we will obtain 
    \[
        p_i(\ell_{T|i}\cdot m_i) + 0.5p_i(\bot)>1-(\theta+2(a-1)\eps'+\eps')-\eps' = 1-\theta-2a\eps'+\eps'/2.
    \]
    
    
    Taking a union bound over $i$, this fails for some $i$ with probability $r(\eps'/(4r)+\eps'/(4r))=\eps'/2$. Note that $\sum_i \ell_{T|i}\cdot m_i = \ell_T\cdot m$. Let $p_{min} = \min_i p_i(\ell_{T|i}\cdot m_i)$. Therefore, we will need to show $p_{correct}+p_\bot>1-\theta-2a\eps'+\eps'/2$, conditioned on the equation holding true for all $i$, to account for the probability of failure.
    \begin{itemize}
        \item If $p_{min}$ is nonzero: Then, $p_{correct}=p_{min}$ and $p_\bot=1-p_{min}$, and indeed we have $p_{min}+0.5(1-p_{min})>1-\theta-2a\eps'+\eps'/2$.
        \item If $p_{min}$ is zero: Let $i$ be such that $p_i(\bot)$ is minimized satisfying $p_i(\ell_{T|i}\cdot m_i)=0$. Then $p_{correct}\geq 0$. Moreover, $p_\bot = \max_j p_j(\bot) \geq p_i(\bot)$. Indeed, $0.5p_i(\bot)>1-\theta-2a\eps'+\eps/2$ is satisfied.
    \end{itemize}
    
    This proves the lemma statement.
\end{proof}

\begin{cor} \label{cor:linear-correct}
    As long as at most $\frac14-\eps'$ of $\enc(x)$ is corrupted, Bob's decoding algorithm $\mathsf{linear\_dec}$ outputs $\ell\cdot x$ correctly with high probability in $n$.
\end{cor}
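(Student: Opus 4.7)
The plan is to reduce Corollary~\ref{cor:linear-correct} to Lemma~\ref{lem:linear-correct} by analyzing the effect of the $\C_{inner}$-decoding step in Line~\ref{line:C-inner}. Observe that one parallel call of $\mathsf{recurse\_linear}(d,z,())$ on the binary received stream is distributionally identical to $\mathsf{recurse\_linear'}(d,z',())$ applied to the phantom stream $z' \in (\bbK \cup \{\bot\})^{[R]^d}$ obtained by independently $\C_{inner}$-decoding each of the $R^d$ blocks of $N_{inner}$ consecutive bits of $z$. Thus it suffices to bound the expected phantom Hamming distance $\theta := \delta(z',\LDC^{\otimes d}(x))$ (where $\bot$ contributes $1/2$ to the distance) and then invoke the lemma with $a=d$, $m=x$, $T=()$.

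The main step, and the main obstacle, is to show $\bbE[\theta] \le \tfrac14 - \eps' + \tfrac{\eps}{4}$. Let $\delta_I$ denote the fraction of bits corrupted by the adversary in the $I$-th inner block, so by hypothesis $\tfrac{1}{R^d}\sum_I \delta_I \le \tfrac14 - \eps'$. I would split on whether $\delta_I$ lies inside the unique decoding radius $\tfrac12 - \tfrac{\eps}{8}$ of $\C_{inner}$ (whose relative distance is $1-\eps/4$). If $\delta_I \le \tfrac12 - \tfrac{\eps}{8}$, the nearest codeword is the true $\sigma^* := \LDC^{\otimes d}(x)_I$, and Line~\ref{line:C-inner} outputs $\sigma^*$ with probability $1-2\delta_I$ and $\bot$ otherwise, giving expected phantom contribution exactly $\delta_I$. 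If $\delta_I > \tfrac12 - \tfrac{\eps}{8}$, the only possible non-$\bot$ output is some unique wrong codeword $\sigma' \ne \sigma^*$ within the decoding radius; the code distance forces $\delta(z,\C_{inner}(\sigma')) \ge 1 - \eps/4 - \delta_I$, so $\sigma'$ is output with probability at most $2\delta_I + \eps/2 - 1$, yielding expected phantom contribution at most $\delta_I + \eps/4$. Summing the per-block bound over $I$ gives the desired estimate on $\bbE[\theta]$.

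Applying Lemma~\ref{lem:linear-correct} pointwise in $z'$ and averaging over the $\C_{inner}$-decoding randomness in a single parallel call yields
\[
\Pr[\text{output} = \ell \cdot x] + \tfrac12 \Pr[\text{output} = \bot] \;\ge\; 1 - \bbE[\theta] - 2d\eps' \;\ge\; \tfrac{3}{4} - \tfrac{\eps}{4} - \tfrac{\eps}{5},
\]
where I used $2d\eps' = \eps/5$. Since $\ell \cdot x \in \bbF_2 \subset \bbK$, after projecting non-$\bot$ outputs to $\bbF_2$ and assigning $\bot$ to a uniformly random bit, a single parallel call produces the correct bit with probability at least $\tfrac{3}{4} - O(\eps)$, which exceeds $\tfrac12$ by a positive constant for $\eps$ small enough. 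Because the $(\log n)^2$ parallel calls use independent randomness, a Chernoff--Hoeffding bound implies that the majority equals $\ell \cdot x$ except with probability $\exp(-\Omega((\log n)^2)) = n^{-\omega(1)}$, completing the proof.
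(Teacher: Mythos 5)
Your proposal follows the same route as the paper's proof: reduce to \cref{lem:linear-correct} with $a=d$ via the phantom stream $z'$ obtained from the $\C_{inner}$-decoding in Line~\ref{line:C-inner}, bound the phantom distance $\delta(z',\LDC^{\otimes d}(x))$, and finish with a majority vote over the $(\log n)^2$ parallel calls. Your per-block accounting is actually sharper and more carefully justified than the paper's: the paper bounds the expected phantom contribution of a block by $(2+\eps)\delta_I$, whereas your case split on whether $\delta_I$ exceeds the unique-decoding radius gives $\delta_I + \eps/4$, yielding $\bbE[\theta]\le \tfrac14+\tfrac{\eps}{4}-\eps'$ instead of roughly $\tfrac12-\eps$; either bound suffices, but yours leaves much more slack.

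One point needs repair. You invoke Chernoff on the $(\log n)^2$ parallel calls on the grounds that they ``use independent randomness,'' but the algorithm as specified (and as emphasized in the paper's proof) computes the symbol $\sigma$ in Line~\ref{line:C-inner} \emph{once}, with randomness shared across all instances reading that block --- so there is a single shared $z'$, and the parallel calls are only conditionally independent given $z'$. Your argument controls only $\bbE_{z'}$ of the per-call success probability; a marginal success probability of $3/4-O(\eps)$ is compatible with $p(z')$ being below $1/2$ on a constant-probability set of $z'$, on which the majority vote could fail. The fix is the step the paper (tersely) performs: since $\theta$ is an average of independent per-block contributions in $[0,1]$, Hoeffding gives $\theta\le \bbE[\theta]+\eps/8$ except with probability $\exp(-\Omega(R^d))$; conditioning on this event, \cref{lem:linear-correct} applies to the fixed $z'$ and gives each parallel call conditional success probability at least $\tfrac12+\Omega(1)$ independently, after which your Chernoff step is valid. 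With that one concentration step inserted, your proof is complete.
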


\begin{proof}
    This follows from $a=d$ in Lemma~\ref{lem:linear-correct}. We remark that that $2d\eps'>\eps/4$, so $p_{correct}+0.5p_\bot\geq 1-\delta(z,\LDC^{\otimes a}(x))-\eps$. Let $z'$ be the string that is decoded when executing $\mathsf{recurse\_linear}$ via Line~\ref{line:C-inner}. There is only one such value ever computed, since all the parallel executions execute the computation of $z'$ simultaneously. Lemma~\ref{lem:linear-correct} says that $p_{correct}+0.5p_\bot\geq 1-\delta(z',\LDC^{\otimes a}(x))-\eps/4$. It remains to understand $\delta(z',\LDC^{\otimes a}(x))$. Let us view this symbol by symbol, yielding each symbol of $z'$ from decoding $\C_{inner}$. For any index $a$, we observe that $\bbE\left[ \delta(z'_a,\LDC^{\otimes a}(x)_a) \right] \leq (2+\eps)\cdot\text{ corruption fraction}$. Recalling the total corruption fraction is $\frac14-\eps$, the total relative distance with high probability is at most $(2+\eps)(1/4-\eps)\leq 1/2-\eps$.

    Now that we have established that $\delta(z',\LDC^{\otimes a}(x))\leq 1/2-\eps$ with high probability, it holds that 
    \[
        p_{correct}+0.5p_\bot\geq 1-\delta(z,\LDC^{\otimes a}(x))-\eps/4 \geq 1/2+\eps/2.
    \]

    The probability of outputting the correct answer is at least $p_{correct}+0.5p_\bot-p_{failure}$, so by Chernoff bound the probability that $\ell\cdot x$ is output is at least $\frac12+\eps/4$ of the time is $1-\frac{1}{n^{\omega(1)}}$ as desired.
\end{proof}

\begin{proof}[Proof of Theorem~\ref{thm:linear-const}]

We return to the main proof of Theorem~\ref{thm:linear-const}. Corollary~\ref{cor:linear-correct} tells us that Bob's decoding algorithm yields the correct output with high probability. Corollary~\ref{cor:linear-space} tells us that the required space is $O\left(\left\lceil\frac{3rQ^2}{R}\right\rceil^d \cdot RQ^3\right)$. We now choose parameters and evaluate this in the three contexts of the theorem statement. We note that $Q$ and $R$ are the only parameters that can be chosen by us, subject to $Q>\left( d \log_Q r\right)^{1000}$ and $R\geq (d \log_Q r)^{150\log_Q r}$ and $Q<r$. 

\paragraph{When $s(n)= (\log n)^t$:}
    Then, $r(n) = (\log n)^{0.2t}$. Choose $Q=(\log n)^{0.01\sqrt{t}}$ and $R=r\cdot(\log n)^{200\log_Q r}$. Note that $\log_Q r=20\sqrt{t}$.
    Then, the bound on space is
    \begin{align*}
        O\left(\left\lceil\frac{3rQ^2}{R}\right\rceil^d \cdot RQ^2\right) 
        &\leq O\left(\left\lceil\frac{3r\cdot (\log n)^{0.02\sqrt{t}}}{r\cdot (\log n)^{200\log_Q r}}\right\rceil^d \cdot r \cdot (\log n)^{200\log_Q r} \cdot (\log n)^{0.03\sqrt{t}} \right) \\
        &\leq O\left(\left\lceil\frac{3r\cdot (\log n)^{0.02\sqrt{t}}}{r\cdot (\log n)^{20\sqrt{t}}}\right\rceil^d \cdot r \cdot (\log n)^{4000\sqrt{t}} \cdot (\log n)^{0.03\sqrt{t}} \right) \\
        &\leq O\left( 1\cdot r \cdot (\log n)^{4001\sqrt{t}} \right) \\
        &\leq s(n).
    \end{align*}

    We bound the rate as follows. Note that $d=\log_r n = \log_{\log n} (n^{5/t})$.
    \[
        \left(\frac{r}{R}\right)^d \geq \left(\frac{r}{r\cdot (\log n)^{200\log_Q r}} \right)^d \geq (\log n)^{-20d\sqrt{t}} \geq n^{-100/\sqrt{t}}
    \]
    which is exactly what we desired.





\paragraph{When $s(n)=n^{\delta}$:} Let $Q=n^{0.01\delta^2}$ and $R=r\cdot 2^{10000/\delta^2}$. Note that $\log_Q r = 20/\delta$ and $d=\log_r n = 5/\delta$. Therefore, $(d \log_Q r)^{150\log_Q r} < (100\delta^{-2})^{3000/\delta} < 2^{10000/\delta^2} = R$. Let us calculate the space (assuming $n$ is sufficiently large):

\begin{align*}
        O\left(\left\lceil\frac{3rQ^2}{R}\right\rceil^d \cdot RQ^2\right) 
        &\leq O\left(\left\lceil\frac{3r\cdot n^{0.02\delta^2}}{r\cdot 2^{10000/\delta^2}}\right\rceil^d \cdot r\cdot 2^{10000/\delta^2} \cdot n^{0.03\delta^2} \right) \\
        &\leq O\left((n^{0.1\delta^2})^d \cdot r\cdot 2^{10000/\delta^2} \cdot n^{0.03\delta^2} \right) \\
        &\leq O\left(n^{0.5\delta} \cdot r\cdot n^{0.1\delta^2} \right) \\
        &\leq s(n).
    \end{align*}

Next, we bound the rate as follows:

    \[
        \left(\frac{r}{R}\right)^d \geq \left(\frac{r}{r\cdot 2^{10000/\delta^2}} \right)^d = \left(2^{10000/\delta^2}\right)^{-5/\delta} \geq 2^{-50000/\delta^3}
    \]
which is in fact constant.
\end{proof}

\bibliographystyle{alpha}
\bibliography{refs}

\end{document}